\documentclass{llncs}

\usepackage{wrapfig}
\usepackage{amssymb,amsmath,txfonts,mathtools}
\usepackage{cmll}
\usepackage{stmaryrd}
\usepackage{todonotes}
\usepackage{mathpartir}
\usepackage{hyperref}
\usepackage[barr]{xy}
\usepackage{mdframed}
\usepackage{thmtools, thm-restate}
\usepackage{enumitem}

\let\mto\to
\let\to\relax
\newcommand{\to}{\rightarrow}

\newcommand{\interp}[1]{\llbracket #1 \rrbracket}


\date{}

\newcommand{\Four}[0]{\mathsf{4}}
\newcommand{\cat}[1]{\mathcal{#1}}

\newcommand{\Hom}[3]{\mathsf{Hom}_{\cat{#1}}(#2,#3)}
\newcommand{\limp}[0]{\multimap}
\newcommand{\dial}[1]{\mathsf{Dial_{#1}}(\mathsf{Sets})}

\newcommand{\obj}[1]{\mathsf{Obj}(#1)}

\newcommand{\id}[0]{\mathsf{id}}



\newenvironment{ATdefnblock}[3][]{ \framebox{\mbox{#2}} \quad #3 \\[0pt]}{}

\newcommand{\ATnt}[1]{\mathit{#1}}
\newcommand{\ATmv}[1]{\mathit{#1}}

\usepackage{cmll}
                 \usepackage{relsize}


\newcommand{\ATLLdrule}[4][]{{\displaystyle\frac{\begin{array}{l}#2\end{array}}{#3}\quad\ATLLdrulename{#4}}}

\newcommand{\ATLLpremise}[1]{ #1 \\}
\newenvironment{ATLLdefnblock}[3][]{ \framebox{\mbox{#2}} \quad #3 \\[0pt]}{}

\newcommand{\ATLLnt}[1]{\mathit{#1}}
\newcommand{\ATLLmv}[1]{\mathit{#1}}

\newcommand{\ATLLsym}[1]{#1}

\usepackage{cmll}
                 \usepackage{relsize}

\newcommand{\ATLLdruleCXXidName}[0]{\ATLLdrulename{C\_id}}
\newcommand{\ATLLdruleCXXid}[1]{\ATLLdrule[#1]{%
}{
 \Gamma  \vdash  \Gamma }{%
{\ATLLdruleCXXidName}{}%
}}

\newcommand{\ATLLdruleCXXcName}[0]{\ATLLdrulename{C\_c}}
\newcommand{\ATLLdruleCXXc}[1]{\ATLLdrule[#1]{%
\ATLLpremise{  \Gamma_{{\mathrm{1}}}  \vdash  \Gamma_{{\mathrm{2}}}   \quad   \Gamma_{{\mathrm{2}}}  \vdash  \Gamma_{{\mathrm{3}}}  }%
}{
 \Gamma_{{\mathrm{1}}}  \vdash  \Gamma_{{\mathrm{3}}} }{%
{\ATLLdruleCXXcName}{}%
}}

\newcommand{\ATLLdruleCXXaOneName}[0]{\ATLLdrulename{C\_a1}}
\newcommand{\ATLLdruleCXXaOne}[1]{\ATLLdrule[#1]{%
}{
  \ATLLsym{(}   \Gamma_{{\mathrm{1}}}  \circ  \Gamma_{{\mathrm{2}}}   \ATLLsym{)}  \circ  \Gamma_{{\mathrm{3}}}   \vdash   \Gamma_{{\mathrm{1}}}  \circ  \ATLLsym{(}   \Gamma_{{\mathrm{2}}}  \circ  \Gamma_{{\mathrm{3}}}   \ATLLsym{)}  }{%
{\ATLLdruleCXXaOneName}{}%
}}

\newcommand{\ATLLdruleCXXuOneName}[0]{\ATLLdrulename{C\_u1}}
\newcommand{\ATLLdruleCXXuOne}[1]{\ATLLdrule[#1]{%
}{
  \Gamma  \circ  \ATLLsym{*}   \vdash  \Gamma }{%
{\ATLLdruleCXXuOneName}{}%
}}

\newcommand{\ATLLdruleCXXuTwoName}[0]{\ATLLdrulename{C\_u2}}
\newcommand{\ATLLdruleCXXuTwo}[1]{\ATLLdrule[#1]{%
}{
  \ATLLsym{*}  \circ  \Gamma   \vdash  \Gamma }{%
{\ATLLdruleCXXuTwoName}{}%
}}

\newcommand{\ATLLdruleCXXeOneName}[0]{\ATLLdrulename{C\_e1}}
\newcommand{\ATLLdruleCXXeOne}[1]{\ATLLdrule[#1]{%
}{
  \Gamma ( \ATLLnt{A}  \ATLLsym{,}  \ATLLnt{B} )   \vdash   \Gamma ( \ATLLnt{B}  \ATLLsym{,}  \ATLLnt{A} )  }{%
{\ATLLdruleCXXeOneName}{}%
}}

\newcommand{\ATLLdruleCXXeTwoName}[0]{\ATLLdrulename{C\_e2}}
\newcommand{\ATLLdruleCXXeTwo}[1]{\ATLLdrule[#1]{%
}{
  \Gamma (  \ATLLnt{A}  \bullet  \ATLLnt{B}  )   \vdash   \Gamma (  \ATLLnt{B}  \bullet  \ATLLnt{A}  )  }{%
{\ATLLdruleCXXeTwoName}{}%
}}

\newcommand{\ATLLdruleCXXdOneName}[0]{\ATLLdrulename{C\_d1}}
\newcommand{\ATLLdruleCXXdOne}[1]{\ATLLdrule[#1]{%
}{
  \Gamma ( \ATLLnt{A}  \ATLLsym{;}  \ATLLsym{(}   \Delta_{{\mathrm{1}}}  \bullet  \Delta_{{\mathrm{2}}}   \ATLLsym{)} )   \vdash   \Gamma (  \ATLLsym{(}  \ATLLnt{A}  \ATLLsym{;}  \Delta_{{\mathrm{1}}}  \ATLLsym{)}  \bullet  \ATLLsym{(}  \ATLLnt{A}  \ATLLsym{;}  \Delta_{{\mathrm{2}}}  \ATLLsym{)}  )  }{%
{\ATLLdruleCXXdOneName}{}%
}}

\newcommand{\ATLLdruleCXXdTwoName}[0]{\ATLLdrulename{C\_d2}}
\newcommand{\ATLLdruleCXXdTwo}[1]{\ATLLdrule[#1]{%
}{
  \Gamma (  \ATLLsym{(}  \ATLLnt{A}  \ATLLsym{;}  \Delta_{{\mathrm{1}}}  \ATLLsym{)}  \bullet  \ATLLsym{(}  \ATLLnt{A}  \ATLLsym{;}  \Delta_{{\mathrm{2}}}  \ATLLsym{)}  )   \vdash   \Gamma ( \ATLLnt{A}  \ATLLsym{;}  \ATLLsym{(}   \Delta_{{\mathrm{1}}}  \bullet  \Delta_{{\mathrm{2}}}   \ATLLsym{)} )  }{%
{\ATLLdruleCXXdTwoName}{}%
}}

\newcommand{\ATLLdruleCXXdThreeName}[0]{\ATLLdrulename{C\_d3}}
\newcommand{\ATLLdruleCXXdThree}[1]{\ATLLdrule[#1]{%
}{
  \Gamma ( \ATLLnt{A}  \ATLLsym{,}  \ATLLsym{(}   \Delta_{{\mathrm{1}}}  \bullet  \Delta_{{\mathrm{2}}}   \ATLLsym{)} )   \vdash   \Gamma (  \ATLLsym{(}  \ATLLnt{A}  \ATLLsym{,}  \Delta_{{\mathrm{1}}}  \ATLLsym{)}  \bullet  \ATLLsym{(}  \ATLLnt{A}  \ATLLsym{,}  \Delta_{{\mathrm{2}}}  \ATLLsym{)}  )  }{%
{\ATLLdruleCXXdThreeName}{}%
}}

\newcommand{\ATLLdruleCXXdFourName}[0]{\ATLLdrulename{C\_d4}}
\newcommand{\ATLLdruleCXXdFour}[1]{\ATLLdrule[#1]{%
}{
  \Gamma (  \ATLLsym{(}  \ATLLnt{A}  \ATLLsym{,}  \Delta_{{\mathrm{1}}}  \ATLLsym{)}  \bullet  \ATLLsym{(}  \ATLLnt{A}  \ATLLsym{,}  \Delta_{{\mathrm{2}}}  \ATLLsym{)}  )   \vdash   \Gamma ( \ATLLnt{A}  \ATLLsym{,}  \ATLLsym{(}   \Delta_{{\mathrm{1}}}  \bullet  \Delta_{{\mathrm{2}}}   \ATLLsym{)} )  }{%
{\ATLLdruleCXXdFourName}{}%
}}



\newcommand{\ATLLdruleLXXvarName}[0]{\ATLLdrulename{L\_var}}
\newcommand{\ATLLdruleLXXvar}[1]{\ATLLdrule[#1]{%
}{
 \ATLLnt{B}  \vdash  \ATLLnt{B} }{%
{\ATLLdruleLXXvarName}{}%
}}

\newcommand{\ATLLdruleLXXnodeName}[0]{\ATLLdrulename{L\_node}}
\newcommand{\ATLLdruleLXXnode}[1]{\ATLLdrule[#1]{%
}{
 \ATLLsym{*}  \vdash  \ATLLmv{N} }{%
{\ATLLdruleLXXnodeName}{}%
}}

\newcommand{\ATLLdruleLXXCtxName}[0]{\ATLLdrulename{L\_Ctx}}
\newcommand{\ATLLdruleLXXCtx}[1]{\ATLLdrule[#1]{%
\ATLLpremise{  \Gamma_{{\mathrm{1}}}  \vdash  \Gamma_{{\mathrm{2}}}   \quad   \Gamma_{{\mathrm{2}}}  \vdash  \ATLLnt{A}  }%
}{
 \Gamma_{{\mathrm{1}}}  \vdash  \ATLLnt{A} }{%
{\ATLLdruleLXXCtxName}{}%
}}

\newcommand{\ATLLdruleLXXparaIName}[0]{\ATLLdrulename{L\_paraI}}
\newcommand{\ATLLdruleLXXparaI}[1]{\ATLLdrule[#1]{%
\ATLLpremise{  \Gamma  \vdash  \ATLLnt{A}   \quad   \Delta  \vdash  \ATLLnt{B}  }%
}{
 \Gamma  \ATLLsym{,}  \Delta  \vdash  \ATLLnt{A}  \odot  \ATLLnt{B} }{%
{\ATLLdruleLXXparaIName}{}%
}}

\newcommand{\ATLLdruleLXXchoiceIName}[0]{\ATLLdrulename{L\_choiceI}}
\newcommand{\ATLLdruleLXXchoiceI}[1]{\ATLLdrule[#1]{%
\ATLLpremise{  \Gamma  \vdash  \ATLLnt{A}   \quad   \Delta  \vdash  \ATLLnt{B}  }%
}{
  \Gamma  \bullet  \Delta   \vdash  \ATLLnt{A}  \sqcup  \ATLLnt{B} }{%
{\ATLLdruleLXXchoiceIName}{}%
}}

\newcommand{\ATLLdruleLXXseqIName}[0]{\ATLLdrulename{L\_seqI}}
\newcommand{\ATLLdruleLXXseqI}[1]{\ATLLdrule[#1]{%
\ATLLpremise{  \Gamma  \vdash  \ATLLnt{A}   \quad   \Delta  \vdash  \ATLLnt{B}  }%
}{
 \Gamma  \ATLLsym{;}  \Delta  \vdash  \ATLLnt{A}  \rhd  \ATLLnt{B} }{%
{\ATLLdruleLXXseqIName}{}%
}}

\newcommand{\ATLLdruleLXXparaEName}[0]{\ATLLdrulename{L\_paraE}}
\newcommand{\ATLLdruleLXXparaE}[1]{\ATLLdrule[#1]{%
\ATLLpremise{  \Gamma  \vdash  \ATLLnt{A}  \odot  \ATLLnt{B}   \quad    \Delta ( \ATLLnt{A}  \ATLLsym{,}  \ATLLnt{B} )   \vdash  \ATLLnt{C}  }%
}{
  \Delta ( \Gamma )   \vdash  \ATLLnt{C} }{%
{\ATLLdruleLXXparaEName}{}%
}}

\newcommand{\ATLLdruleLXXchoiceEName}[0]{\ATLLdrulename{L\_choiceE}}
\newcommand{\ATLLdruleLXXchoiceE}[1]{\ATLLdrule[#1]{%
\ATLLpremise{  \Gamma  \vdash  \ATLLnt{A}  \sqcup  \ATLLnt{B}   \quad    \Delta (  \ATLLnt{A}  \bullet  \ATLLnt{B}  )   \vdash  \ATLLnt{C}  }%
}{
  \Delta ( \Gamma )   \vdash  \ATLLnt{C} }{%
{\ATLLdruleLXXchoiceEName}{}%
}}

\newcommand{\ATLLdruleLXXseqEName}[0]{\ATLLdrulename{L\_seqE}}
\newcommand{\ATLLdruleLXXseqE}[1]{\ATLLdrule[#1]{%
\ATLLpremise{  \Gamma  \vdash  \ATLLnt{A}  \rhd  \ATLLnt{B}   \quad    \Delta ( \ATLLnt{A}  \ATLLsym{;}  \ATLLnt{B} )   \vdash  \ATLLnt{C}  }%
}{
  \Delta ( \Gamma )   \vdash  \ATLLnt{C} }{%
{\ATLLdruleLXXseqEName}{}%
}}

\newcommand{\ATLLdruleLXXlimpIName}[0]{\ATLLdrulename{L\_limpI}}
\newcommand{\ATLLdruleLXXlimpI}[1]{\ATLLdrule[#1]{%
\ATLLpremise{ \Gamma  \ATLLsym{,}  \ATLLnt{A}  \vdash  \ATLLnt{B} }%
}{
 \Gamma  \vdash  \ATLLnt{A}  \multimap  \ATLLnt{B} }{%
{\ATLLdruleLXXlimpIName}{}%
}}

\newcommand{\ATLLdruleLXXlimpEName}[0]{\ATLLdrulename{L\_limpE}}
\newcommand{\ATLLdruleLXXlimpE}[1]{\ATLLdrule[#1]{%
\ATLLpremise{  \Gamma  \vdash  \ATLLnt{A}  \multimap  \ATLLnt{B}   \quad   \Delta  \vdash  \ATLLnt{A}  }%
}{
 \Gamma  \ATLLsym{,}  \Delta  \vdash  \ATLLnt{B} }{%
{\ATLLdruleLXXlimpEName}{}%
}}





\newenvironment{OLLdefnblock}[3][]{ \framebox{\mbox{#2}} \quad #3 \\[0pt]}{}

\usepackage{cmll}
                 \usepackage{relsize}



\newenvironment{ATermsdefnblock}[3][]{ \framebox{\mbox{#2}} \quad #3 \\[0pt]}{}

\newcommand{\ATermsnt}[1]{\mathit{#1}}
\newcommand{\ATermsmv}[1]{\mathit{#1}}

\usepackage{cmll}
                 \usepackage{relsize}


\renewcommand{\ATLLdrule}[4][]{{\displaystyle\frac{\begin{array}{l}#2\end{array}}{#3}\,#4}}
\renewcommand{\ATLLdruleLXXvarName}{\text{id}}
\renewcommand{\ATLLdruleLXXnodeName}{\text{base}}
\renewcommand{\ATLLdruleLXXCtxName}{\text{CM}}
\renewcommand{\ATLLdruleLXXparaIName}{\odot_i}
\renewcommand{\ATLLdruleLXXchoiceIName}{\sqcup_i}
\renewcommand{\ATLLdruleLXXseqIName}{\rhd_i}
\renewcommand{\ATLLdruleLXXparaEName}{\odot_e}
\renewcommand{\ATLLdruleLXXchoiceEName}{\sqcup_e}
\renewcommand{\ATLLdruleLXXseqEName}{\rhd_e}
\renewcommand{\ATLLdruleLXXlimpIName}{\limp_i}
\renewcommand{\ATLLdruleLXXlimpEName}{\limp_e}
\renewcommand{\ATLLdruleCXXidName}{\text{id}}
\renewcommand{\ATLLdruleCXXcName}{\text{comp}}
\renewcommand{\ATLLdruleCXXaOneName}{\text{assoc}}
\renewcommand{\ATLLdruleCXXuOneName}{\text{unit}_1}
\renewcommand{\ATLLdruleCXXuTwoName}{\text{unit}_2}
\renewcommand{\ATLLdruleCXXeOneName}{\text{ex}_1}
\renewcommand{\ATLLdruleCXXeTwoName}{\text{ex}_2}
\renewcommand{\ATLLdruleCXXdOneName}{\text{dist}_1}
\renewcommand{\ATLLdruleCXXdTwoName}{\text{dist}_2}
\renewcommand{\ATLLdruleCXXdThreeName}{\text{dist}_3}
\renewcommand{\ATLLdruleCXXdFourName}{\text{dist}_4}

\begin{document}

\title{An Intuitionistic Linear Logical Semantics of SAND Attack Trees}

\author{Harley Eades III}
\institute{Computer Science\\Augusta University \\ \href{mailto:heades@augusta.edu}{harley.eades@gmail.com}}

\maketitle 

\begin{abstract}
  In this paper we introduce a new logical foundation of SAND attack
  trees in intuitionistic linear logic.  This new foundation is based
  on a new logic called the Attack Tree Linear Logic (ATLL).  Before
  introducing ATLL we given several new logical models of attack trees,
  the first, is a very basic model based in truth tables.  Then we lift
  this semantics into a semantics of attack trees based on lineales
  which introduces implication, but this can be further lifted into a
  dialectica model which ATLL is based.  One important feature of ATLL
  is that it supports full distributivity of sequential conjunction
  over choice.
\end{abstract}

\section{Introduction}
\label{sec:introduction}
Attack trees are a type of tree based graphical model used in
analyzing the threat potential of secure systems.  They were
popularized by Bruce Schneier \cite{Schneier:1999} at the NSA and they
were found to be affective at analyzing both physical and virtual
secure systems.  An attack tree represents all possible ways to
accomplish a particular attack on a system.  The root of the tree is
the over all goal, and then each child node represents a refinement of
the overall goal.  The leafs are particular attacks needed to
accomplish the goal of the tree.  An example attack tree for attacking
an ATM can be found in Figure~\ref{fig:atm-tree1}.

In this paper attack trees have three types of branching nodes:
and-node, or-nodes, and sequence-nodes.  And-nodes are depicted
graphically by drawing a horizontal line linking the children nodes.
This type of node represents a set of attacks that all must be
executed, but there is no particular order on which must be executed
first. Similarly, sequence-nodes are depicted by a horizontal arrow
linking the children nodes.  These nodes represent ordered execution
of attacks.  Finally, the remaining nodes are or-nodes which represent
a choice between attacks.  This formalization of attack trees is known
as SAND attack trees and were introduced by Jhawar
et. al~\cite{Jhawar:2015}.

The way in which we describe attack trees suggests that they should be
considered as describing a process in terms of smaller processes.
Then each leaf of an attack tree represents a process -- an attack --
that must be executed to reach the overall goal.  Branching nodes are
then operators of process algebra.  Thus, an attack tree is a process
tree representing an attack.

Until recently attack trees have been primarily a tool without a
theoretical foundation.  This has become worrisome, because several
projects have used them to asses the security of large scale systems.
Therefore, a leading question regarding attack trees is, what is a
mathematical model of attack trees?  There have been numerous proposed
answers to this question.  Some examples are propositional logic,
multisets, directed acyclic graphs, source sink graphs (or
parallel-series pomsets), Petri nets, and Markov processes.

Give that there are quite a few different models one can then ask, is
there a unifying foundation in common to each of these proposed
models?  Furthermore, can this unifying foundation be used to further
the field of attack trees and build new tools for conducting threat
analysis?  This paper contributes to the answer of these questions.
Each of the proposed models listed above have something in common.
They can all be modeled in some form of a symmetric monoidal category
\cite{Tzouvaras:1998,Brown:1991,Fiore:2013,FrancescoAlbasini2010}.
That is all well and good, but what can we gain from monoidal
categories?

Monoidal categories are a mathematical model of linear logic as
observed through the beautiful Curry-Howard-Lambek correspondence
\cite{Mellies:2009}.  In linear logic every hypothesis must be used
exactly once, and hence, if we view a hypothesis as a resource, then
this property can be stated as every resource must be consumed.  This
is an ideal setting for reasoning about processes like attack trees.

Multisets and Petri nets both capture the idea that the nodes of an
attack tree consist of the attack action and the state -- the
resource -- of the system being analyzed. As it turns out, linear
logic has been shown to be a logical foundation for multisets
\cite{Tzouvaras:1998} and Petri Nets \cite{Brown:1991}.  Thus, linear
logic has the ability to model the state as well as attack actions of
the goals of an attack tree.  

In this paper\footnote{This material is based upon work supported by
  the National Science Foundation CRII CISE Research Initiation grant,
  ``CRII:SHF: A New Foundation for Attack Trees Based on Monoidal
  Categories``, under Grant No. 1565557.} we introduce a new logical
foundation of SAND attack trees in intuitionistic linear logic.  This
new foundation is based on a new logic called the Attack Tree Linear
Logic (ATLL).

\begin{wrapfigure}{r}{0.5\textwidth}
  \vspace{-45px}
  \begin{center}
    \includegraphics[scale=0.210]{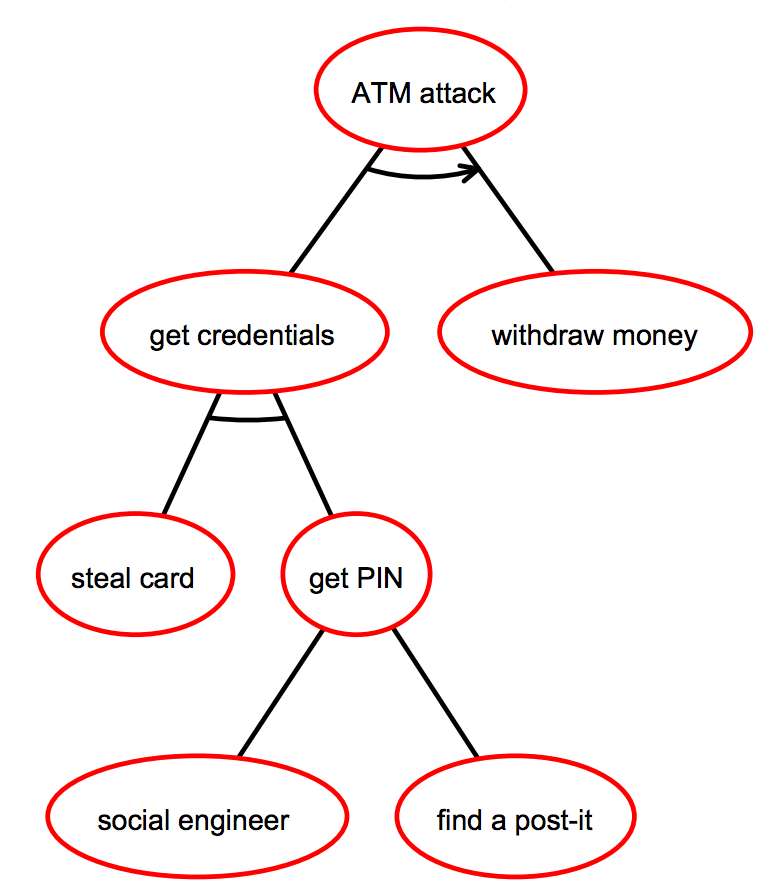}
  \end{center}  
  \label{fig:atm-tree1}
  \caption{Attack Tree for an ATM attack from Figure~1 on page 2 of Kordy et al.~\cite{Kordy2017}}
  \vspace{-55px}
\end{wrapfigure}
In ATLL (Section~\ref{sec:the_attack_tree_linear_logic_(atll)}) attack
trees are modeled as linear formulas where base attacks are atomic
formulas, and each branching node corresponds to a binary logical
connective.  Consider the attack tree for an ATM attack from
Figure~\ref{fig:atm-tree1}.  We can model this attack tree as a
formula in ATLL as follows:
\[
\begin{array}{lll}
  \ATLLnt{B_{{\mathrm{1}}}} := \text{``steal card''}\\
  \ATLLnt{B_{{\mathrm{2}}}} := \text{``social engineer''}\\
  \ATLLnt{B_{{\mathrm{3}}}} := \text{``find a post-it''}\\
  \ATLLnt{B_{{\mathrm{4}}}} := \text{``withdrawal money''}\\
  \ATLLnt{T_{{\mathrm{1}}}} := \ATLLsym{(}  \ATLLnt{B_{{\mathrm{1}}}}  \odot  \ATLLsym{(}  \ATLLnt{B_{{\mathrm{2}}}}  \sqcup  \ATLLnt{B_{{\mathrm{3}}}}  \ATLLsym{)}  \ATLLsym{)}  \rhd  \ATLLnt{B_{{\mathrm{4}}}}
\end{array}
\]
Each $\ATLLnt{B_{\ATLLmv{i}}}$ is an atomic formula, parallel conjunction -- and-nodes
-- of attack trees is denoted by $\ATLLnt{T_{{\mathrm{1}}}}  \odot  \ATLLnt{T_{{\mathrm{2}}}}$, choice between
attacks -- or-nodes -- by $\ATLLnt{T_{{\mathrm{1}}}}  \sqcup  \ATLLnt{T_{{\mathrm{2}}}}$, and sequential conjunction of
attacks -- sequence-nodes -- by $\ATLLnt{T_{{\mathrm{1}}}}  \rhd  \ATLLnt{T_{{\mathrm{2}}}}$.  Parallel conjunction
and choice are both symmetric, but sequential conjunction is not.  Now
that we can model attack trees as formulas we should be able to use
the logic to reason about attack trees.

Reasoning about attack trees corresponds to proving implications
between them.  In fact, every equation from Jhawar et al.'s
work on attack trees with sequential conjunction \cite{Jhawar:2015}
can be proven as an implication in ATLL.  Consider a second attack
tree from Figure~\ref{fig:atm-tree2}.
\begin{figure}
  \begin{center}
    \includegraphics[scale=0.35]{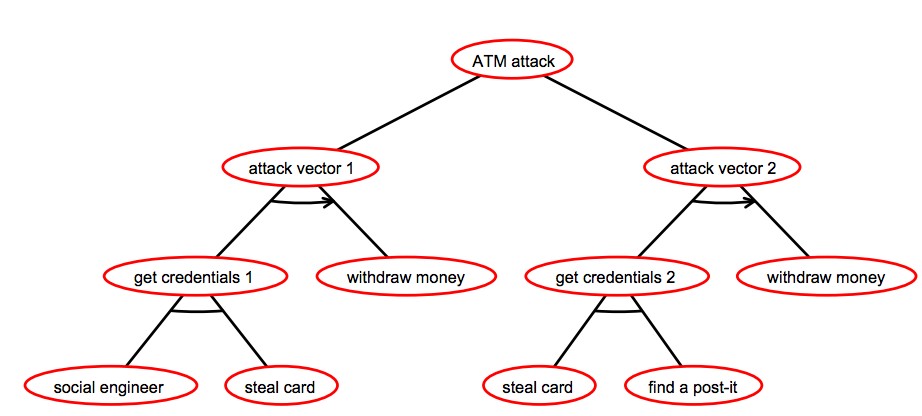}
  \end{center}
  \caption{Canonical Attack Tree for an ATM attack from Figure~2 on page 3 of Kordy et al.~\cite{Kordy2017}}
  \label{fig:atm-tree2}
\end{figure}
We can model this attack tree as a formula in ATLL as well (the base
attacks are the same):
\[
\begin{array}{lll}
  \ATLLnt{T_{{\mathrm{2}}}} := \ATLLsym{(}  \ATLLsym{(}  \ATLLnt{B_{{\mathrm{1}}}}  \odot  \ATLLnt{B_{{\mathrm{2}}}}  \ATLLsym{)}  \rhd  \ATLLnt{B_{{\mathrm{4}}}}  \ATLLsym{)}  \sqcup  \ATLLsym{(}  \ATLLsym{(}  \ATLLnt{B_{{\mathrm{1}}}}  \odot  \ATLLnt{B_{{\mathrm{3}}}}  \ATLLsym{)}  \rhd  \ATLLnt{B_{{\mathrm{4}}}}  \ATLLsym{)}
\end{array}
\]
The attack trees $\ATLLnt{T_{{\mathrm{1}}}}$ and $\ATLLnt{T_{{\mathrm{2}}}}$ represents the same attack, and
this can be proven in ATLL by showing that $\ATLLnt{T_{{\mathrm{1}}}}  \multimapboth  \ATLLnt{T_{{\mathrm{2}}}}$ where
$ \multimapboth $ denotes bi-implication.  The proof holds by using the
distributive rules for choice.

Modeling attack trees in linear logic has a number of benefits.
First, it connects attack trees back to logic, but in elegant and
simple way.  Kordy et al.~\cite{Kordy2010,Kordy:2012} proposed that
attack trees be modeled in propositional logic, but in this model
attacks can be freely duplicated and contracted which goes against the
process nature of an attack tree.  However, linear logic restores this
natural interpretation without loosing the process interpretation of
attack trees and without having to resort to complicated notation
unlike models similar to the situation calculus \cite{Samarji2013}.
By connecting attack trees to logic we can also tap into the long
standing research and development of automation, for example, SAT,
SMT, proof search, etc.  Finally, by connecting to logic we also
connect to the theory of statically typed functional programming
through the Curry-Howard-Lambek correspondence which will allow for
the development of a programming language that can be used to certify
the correctness of attack trees and the analysis performed using them.
In particular, we can use ATLL to design a functional scripting
language for the definition of attack trees that has the same
semantics as attack trees.

Before introducing ATLL we given several new logical models of attack
trees starting in
Section~\ref{sec:a_quaternary_semantics_for_sand_attack_trees} with a
very basic truth table semantics.  Then we lift this semantics into a
semantics of attack trees based on lineales in
Section~\ref{sec:lineale_semantics_for_sand_attack_trees}, but this
can be further lifted into a dialectica model which ATLL is based
(Section~\ref{sec:dialectica_semantics_of_sand_attack_trees}).

\textbf{Contributions.}  This paper offers the following
contributions:
\begin{itemize}
\item The first simple truth table semantics of SAND attack trees,
\item The first categorical model of attack trees in dialectica
  categories,
\item The Attack Tree Linear Logic (ATLL): a new intuitionistic linear
  logic for the specification and analysis of attack trees.

\item Horne et al.~\cite{horne2017semantics} propose to model attack
  trees in linear logic as well, but their logic is based on classical
  linear logic, and does not support full distribution of sequential
  conjunction over choice, while ATLL does.
\end{itemize}

Every section except
Section~\ref{sec:the_attack_tree_linear_logic_(atll)} has been
formalized in the Agda proof assistant\footnote{The Agda formalization
  can be found here
  \url{https://github.com/MonoidalAttackTrees/ATLL-Formalization}}.
Furthermore, all of the syntax used in this paper was formalized using
OTT~\cite{Sewell:2010}.

\section{SAND Attack Trees}
\label{sec:sand_attack_trees}
In this section we introduce SAND attack trees.  This formulation of
attack trees was first proposed by Jhawar et al. \cite{Jhawar:2015}.
\begin{definition}
  \label{def:atrees}
  Suppose $\mathsf{B}$ is a set of base attacks whose elements are
  denoted by $\ATmv{b}$.  Then an \textbf{attack tree} is defined by
  the following grammar:
  \[
  \begin{array}{lll}
    \ATnt{A},\ATnt{B},\ATnt{C},\ATnt{T} := \ATmv{b} \mid  \mathsf{OR}( \ATnt{A} , \ATnt{B} )  \mid  \mathsf{AND}( \ATnt{A} , \ATnt{B} )  \mid  \mathsf{SAND}( \ATnt{A} , \ATnt{B} ) \\
  \end{array}
  \]
  \noindent
  Equivalence of attack trees, denoted by $ \ATnt{A}  \approx  \ATnt{B} $, is defined as
  follows:
  \begin{center}
    \begin{math}
      \setlength{\arraycolsep}{10px}
      \begin{array}{lll}
        (\text{E}_1) &   \mathsf{OR}(  \mathsf{OR}( \ATnt{A} , \ATnt{B} )  , \ATnt{C} )   \approx   \mathsf{OR}( \ATnt{A} ,  \mathsf{OR}( \ATnt{B} , \ATnt{C} )  )  \\
        (\text{E}_2) &   \mathsf{AND}(  \mathsf{AND}( \ATnt{A} , \ATnt{B} )  , \ATnt{C} )   \approx   \mathsf{AND}( \ATnt{A} ,  \mathsf{AND}( \ATnt{B} , \ATnt{C} )  )  \\
        (\text{E}_3) &   \mathsf{SAND}(  \mathsf{SAND}( \ATnt{A} , \ATnt{B} )  , \ATnt{C} )   \approx   \mathsf{SAND}( \ATnt{A} ,  \mathsf{SAND}( \ATnt{B} , \ATnt{C} )  )  \\
        (\text{E}_4) &   \mathsf{OR}( \ATnt{A} , \ATnt{B} )   \approx   \mathsf{OR}( \ATnt{B} , \ATnt{A} )  \\
        (\text{E}_5) &   \mathsf{AND}( \ATnt{A} , \ATnt{B} )   \approx   \mathsf{AND}( \ATnt{B} , \ATnt{A} )  \\
        (\text{E}_6) &   \mathsf{AND}( \ATnt{A} ,  \mathsf{OR}( \ATnt{B} , \ATnt{C} )  )   \approx   \mathsf{OR}(  \mathsf{AND}( \ATnt{A} , \ATnt{B} )  ,  \mathsf{AND}( \ATnt{A} , \ATnt{C} )  )  \\
        (\text{E}_7) &   \mathsf{SAND}( \ATnt{A} ,  \mathsf{OR}( \ATnt{B} , \ATnt{C} )  )   \approx   \mathsf{OR}(  \mathsf{SAND}( \ATnt{A} , \ATnt{B} )  ,  \mathsf{SAND}( \ATnt{A} , \ATnt{C} )  )  \\
      \end{array}
    \end{math}
  \end{center}
\end{definition}
This definition of SAND attack trees differs slightly from Jhawar
et. al.'s~\cite{Jhawar:2015} definition.  They define $n$-ary
operators, but we only consider the binary case, because it fits better
with the models presented here and it does not loose any generality
because we can model the $n$-ary case using binary operators in the
obvious way.  Finally, they also include the equivalence $  \mathsf{OR}( \ATnt{A} , \ATnt{A} )   \approx  \ATnt{A} $, but it is not obvious how to include this in the models
presented here and we leave its addition to future work.



\section{A Quaternary Semantics for SAND Attack Trees}
\label{sec:a_quaternary_semantics_for_sand_attack_trees}
\newcommand{\forth}{\frac{1}{4}}
\newcommand{\half}{\frac{1}{2}}

Kordy et al.~\cite{Kordy:2012} gave a very elegant and simple
semantics of attack-defense trees in boolean algebras.  Unfortunately,
while their semantics is elegant it does not capture the resource
aspect of attack trees, it allows contraction, and it does not provide
a means to model sequential conjunction.  In this section we give a
semantics of attack trees in the spirit of Kordy et al.'s using a four
valued logic.

The propositional variables of our quaternary logic, denoted by $A$, $B$,
$C$, and $D$, range over the set $\mathsf{4} = \{0, \forth, \half,
1\}$.  We think of $0$ and $1$ as we usually do in boolean algebras,
but we think of $\forth$ and $\half$ as intermediate values that can
be used to break various structural rules.  In particular we will use
these values to prevent exchange for sequential conjunction from
holding, and contraction from holding for parallel and sequential
conjunction.
\begin{definition}
  \label{def:logical-connectives}
  The logical connectives of our four valued logic are defined as
  follows:
  \begin{itemize}
  \item[] Parallel and Sequential Conjunction:
    \begin{center}
      \begin{math}
        \setlength{\arraycolsep}{10px}
        \begin{array}{lll}
          \begin{array}{lll}
            A \odot_4 B = 1,\\
            \,\,\,\,\,\,\,\text{where neither $A$ nor $B$ are $0$}\\
          A \odot_4 B = 0, \text{otherwise}\\
          \\
        \end{array}
        &
        \begin{array}{lll}          
          A \rhd_4 B = 1,\\
          \,\,\,\,\,\,\,\text{where } A \in \{\half, 1\} \text{ and } B \neq 0\\
          \forth \rhd_4 B = \forth, \text{where $B \neq 0$}\\[2px]         
          A \rhd_4 B = 0, \text{otherwise}
        \end{array}
        \end{array}
      \end{math}
    \end{center}
  \item[] Choice: $A \sqcup_4 B = \mathsf{max}(A,B)$    
  \end{itemize}
\end{definition}
These definitions are carefully crafted to satisfy the necessary
properties to model attack trees.  Comparing these definitions with
Kordy et al.'s~\cite{Kordy:2012} work we can see that choice is
defined similarly, but parallel conjunction is not a product --
ordinary conjunction -- but rather a linear tensor product, and
sequential conjunction is not actually definable in a boolean algebra,
and hence, makes heavy use of the intermediate values to insure that
neither exchange nor contraction hold.  

We use the usual notion of equivalence between propositions, that is,
propositions $\phi$ and $\psi$ are considered equivalent, denoted by
$\phi \equiv \psi$, if and only if they have the same truth tables. In
order to model attack trees the previously defined logical connectives
must satisfy the appropriate equivalences corresponding to the
equations between attack trees.  These equivalences are all proven by
the following result.
\begin{lemma}[Properties of the Attack Tree Operators in the Quaternary Semantics]
  \label{lemma:props_atree_ops_quaternary-semantics}
  \begin{itemize}
  \item[] (Symmetry) For any $A$ and $B$, $A \bullet B \equiv B \bullet A$, for $\bullet \in \{\odot_4, \sqcup_4\}$.\\[-5px]
  \item[] (Symmetry for Sequential Conjunction) It is not the case that, for any $A$ and $B$, $A \rhd_4 B \equiv B \rhd_4 A$.\\[-5px]
  \item[] (Associativity) For any $A$, $B$, and $C$, $(A \bullet B) \bullet C \equiv A \bullet (B \bullet C)$, for $\bullet \in \{\odot_4, \rhd_4, \sqcup_4\}$.\\[-5px]
  \item[] (Contraction for Parallel and Sequential Conjunction) It is not the case that for any $A$, $A \bullet A \equiv A$, for $\bullet \in \{\odot_4, \rhd_4\}$.\\[-5px]
  \item[] (Distributive Law) For any $A$, $B$, and $C$, $A \bullet (B \sqcup_4 C) \equiv (A \bullet B) \sqcup_4 (A \bullet C)$, for $\bullet \in \{\odot_4, \rhd_4\}$.\\[-5px]
  \end{itemize}
\end{lemma}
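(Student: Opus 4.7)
The overall plan is to exploit the finiteness of $\mathsf{4} = \{0, \forth, \half, 1\}$: every claim is an equation or inequation between expressions in at most three unknowns ranging over a four-element set, so each reduces to a finite check on $4^n$ rows of the relevant truth table. The challenge is simply to organize these checks so that the underlying structure is visible rather than a brute-force enumeration.

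For the first two parts the work is immediate. Symmetry of $\odot_4$ and $\sqcup_4$ is apparent from their definitions: $\sqcup_4$ is $\max$, and $\odot_4$ outputs $1$ exactly when neither argument is $0$, both visibly symmetric conditions. To refute symmetry of $\rhd_4$ I would exhibit one counterexample, e.g.\ $\half \rhd_4 \forth = 1 \neq \forth = \forth \rhd_4 \half$. Non-contraction is likewise witnessed by a single value: at $A = \half$ we have $A \odot_4 A = 1$ and $A \rhd_4 A = 1$, neither of which equals $\half$.

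For associativity, $\sqcup_4 = \max$ is associative by a standard argument, and for $\odot_4$ both $(A \odot_4 B) \odot_4 C$ and $A \odot_4 (B \odot_4 C)$ evaluate to $1$ exactly when none of $A, B, C$ is $0$ and to $0$ otherwise. The key structural observation for $\rhd_4$ is that its value depends only on (i) the class of its left argument among $\{0\}$, $\{\forth\}$, $\{\half, 1\}$, and (ii) whether its right argument is $0$ or nonzero. Grouping the cases accordingly, I would show that both parenthesizations agree on this three-way classification: the result is $0$ if any of $A, B, C$ is $0$; $\forth$ if $A = \forth$ and $B, C \neq 0$; and $1$ when $A \in \{\half, 1\}$ and $B, C \neq 0$. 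The same observation drives distributivity: for $\odot_4$, both sides of $A \odot_4 (B \sqcup_4 C) \equiv (A \odot_4 B) \sqcup_4 (A \odot_4 C)$ equal $1$ iff $A \neq 0$ and at least one of $B, C$ is nonzero. For $\rhd_4$, since $\max(B, C) = 0$ iff $B = C = 0$ and $\rhd_4$ on the right depends only on whether its second slot is nonzero, both sides again coincide after a short case split on the class of $A$.

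The main obstacle is the bookkeeping for $\rhd_4$: because its definition is piecewise rather than given by a single formula, both associativity and distributivity threaten to balloon into many subcases. Abstracting $\rhd_4$ through the three-valued class of its first argument together with the zero/nonzero test on its second is what I would rely on to keep each remaining case essentially trivial, and given the finite domain a mechanical verification (as in the paper's Agda development) is a sanity check always available.
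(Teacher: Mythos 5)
Your proposal is correct and follows essentially the same route as the paper: the positive claims are discharged by finite truth-table verification (your classification of $\rhd_4$ by the class of its left argument in $\{0\},\{\forth\},\{\half,1\}$ together with the zero-test on its right argument is just a tidier organization of that check), and the negative claims are refuted by explicit counterexamples. Your witnesses all check out --- $\half \rhd_4 \forth = 1 \neq \forth = \forth \rhd_4 \half$, and $\half \bullet \half = 1 \neq \half$ for $\bullet \in \{\odot_4, \rhd_4\}$ --- and agree with the paper's up to the particular values chosen.
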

\begin{proof}
  Symmetry, associativity, contraction for choice, and the
  distributive law for each operator hold by simply comparing truth
  tables.  As for contraction for parallel conjunction, suppose $A =
  \forth$.  Then by definition $A \odot_4 A = 1$, but $\forth$ is not
  $1$.  Contraction for sequential conjunction also fails, suppose $A
  = \half$.  Then by definition $A \rhd_4 A = 1$, but $\half$ is not
  $1$.  Similarly, symmetry fails for sequential conjunction. Suppose
  $A = \forth$ and $B = \half$.  Then $A \rhd_4 B = \forth$, but $B
  \rhd_4 A = 1$.
\end{proof}

At this point it is quite easy to model attack trees as formulas.  The
following defines their interpretation.
\begin{definition}
  \label{def:interp-aterms-quaternary}
  Suppose $\mathbb{B}$ is some set of base attacks, and $\nu :
  \mathbb{B} \mto \mathsf{PVar}$ is an assignment of base attacks to
  propositional variables.  Then we define the interpretation of
  $\mathsf{ATerms}$ to propositions as follows:
  \begin{center}
    \begin{math}
      \setlength{\arraycolsep}{5px}
      \begin{array}{lll}
        \begin{array}{lll}
          \interp{\ATermsmv{b} \in \mathbb{B}} & = & \nu(\ATermsmv{b})\\
          \interp{ \mathsf{AND}( \ATermsnt{A} , \ATermsnt{B} ) } & = & \interp{\ATermsnt{A}} \odot_4 \interp{\ATermsnt{B}}\\
        \end{array}
        &
        \begin{array}{lll}
          \interp{ \mathsf{SAND}( \ATermsnt{A} , \ATermsnt{B} ) } & = & \interp{\ATermsnt{A}} \rhd_4 \interp{\ATermsnt{B}}\\
          \interp{ \mathsf{OR}( \ATermsnt{A} , \ATermsnt{B} ) } & = & \interp{\ATermsnt{A}} \sqcup_4 \interp{\ATermsnt{B}}\\
        \end{array}
      \end{array}
    \end{math}
  \end{center}
\end{definition}
We can use this semantics to prove equivalences between attack trees.
\begin{lemma}[Equivalence of Attack Trees in the Quaternary Semantics]
  \label{lemma:equivalence_of_attack_trees}
  Suppose $\mathbb{B}$ is some set of base attacks, and $\nu :
  \mathbb{B} \mto \mathsf{PVar}$ is an assignment of base attacks to
  propositional variables.  Then for any attack trees $\ATermsnt{A}$ and
  $\ATermsnt{B}$, $ \ATermsnt{A}  \approx  \ATermsnt{B} $ if and only if $\interp{\ATermsnt{A}} \equiv \interp{\ATermsnt{B}}$.
\end{lemma}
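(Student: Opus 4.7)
The plan is to prove the biconditional in the two standard directions, with the forward direction (soundness of the semantics with respect to $\approx$) being routine and the backward direction (completeness) being the real work.

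For the forward direction, I would proceed by induction on the derivation of $A \approx B$. The cases for reflexivity, symmetry, transitivity, and congruence under the three node constructors are immediate from the fact that $\equiv$ is an equivalence relation and that $\interp{-}$ is defined compositionally in terms of $\odot_4$, $\rhd_4$, and $\sqcup_4$. The seven base equations E1 through E7 each reduce to a property of the operators that is already established in Lemma~\ref{lemma:props_atree_ops_quaternary-semantics}: E1, E2, E3 to associativity; E4, E5 to symmetry of $\odot_4$ and $\sqcup_4$; and E6, E7 to the distributive laws of $\odot_4$ and $\rhd_4$ over $\sqcup_4$. So this direction is essentially a bookkeeping exercise.

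For the backward direction, the plan is to put every attack tree into a canonical form built from $\odot$-and-$\rhd$-terms combined at the top by $\sqcup$, and then show that two canonical forms with equal interpretations must be identical up to the symmetry and associativity axioms. Concretely, I would first define a normalization function that repeatedly applies E6 and E7 left-to-right to push every $\mathsf{OR}$ above every $\mathsf{AND}$ and $\mathsf{SAND}$, yielding an attack tree of the shape $\bigsqcup_i P_i$ where each $P_i$ is built only from base attacks, $\mathsf{AND}$, and $\mathsf{SAND}$. By the forward direction this normalization preserves $\interp{-}$, and by construction it produces terms that are $\approx$-equivalent to the original. It then suffices to prove the backward direction for two normal forms $\bigsqcup_i P_i$ and $\bigsqcup_j Q_j$.

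The hard part will be the final injectivity argument on normal forms. The $\mathsf{AND}$/$\mathsf{SAND}$-only subterms $P_i$ must be compared via the semantics in such a way that different trees of base attacks are distinguished: here is where the intermediate values $\tfrac{1}{4}$ and $\tfrac{1}{2}$ must pay their debt. My plan is, for each base attack $b$, to consider families of valuations sending $b$ to each of the four truth values while sending all other base attacks to $1$, and to read off the structure of a $P_i$ from the resulting values of $\interp{P_i}$; in particular, the non-symmetry of $\rhd_4$ under the $\tfrac{1}{4}$ vs.\ $\tfrac{1}{2}$ split should let us detect the left/right position of each base attack under a $\mathsf{SAND}$, and the failure of contraction should let us detect multiplicity. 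If this succeeds, then two normal forms with equal truth tables must be $\approx$-equivalent by repeated application of symmetry and associativity of $\odot$ and $\sqcup$, completing the proof. If the valuations above are not fine enough to separate all non-$\approx$-equivalent $P_i$, the plan is to enlarge the valuation family or, failing that, to restrict the statement to trees over a fixed finite base and verify separation by exhaustive truth-table comparison on that finite set.
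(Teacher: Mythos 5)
Your forward direction coincides with the paper's entire proof: the paper's argument is just ``induction on the form of $ \ATermsnt{A}  \approx  \ATermsnt{B} $,'' which, as you say, reduces $\text{E}_1$--$\text{E}_7$ to the clauses of Lemma~\ref{lemma:props_atree_ops_quaternary-semantics} and handles the closure rules because $\equiv$ is a congruence and $\interp{-}$ is compositional. The paper offers nothing for the backward direction, so you are attempting strictly more than it does.

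The gap is in that backward direction, and it is not repairable: the completeness claim is false, and your plan breaks exactly at the ``injectivity on normal forms'' step you flag as the hard part. Two concrete obstructions. First, $\sqcup_4 = \mathsf{max}$ is idempotent, so $\interp{\mathsf{OR}(b,b)} = \interp{b}$ under every valuation; yet the paper explicitly omits $\mathsf{OR}(A,A) \approx A$ from the axioms, and no combination of $\text{E}_1$--$\text{E}_7$ derives it (the free model of $\text{E}_1$--$\text{E}_7$ as multisets of series--parallel pomsets validates all seven equations but separates $\mathsf{OR}(b,b)$ from $b$). Both sides are already in your $\bigsqcup_i P_i$ normal form with different numbers of disjuncts, and no enlargement of the valuation family can distinguish them because the interpretations are equal as functions. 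Second, $\odot_4$ takes values only in $\{0,1\}$, and $A \rhd_4 B = A \odot_4 B$ whenever $A \neq \tfrac{1}{4}$; hence $\interp{\mathsf{SAND}(\mathsf{AND}(b_1,b_2),b_3)}$ and $\interp{\mathsf{AND}(\mathsf{AND}(b_1,b_2),b_3)}$ are identical, although the two trees are not $\approx$-related (in the pomset model one orders $b_3$ after $b_1, b_2$ and the other does not). These trees are $\mathsf{OR}$-free, so your normalization leaves them untouched, and the $\tfrac{1}{4}$ value that is supposed to reveal the left argument of a $\mathsf{SAND}$ is unreachable once that argument is itself an $\mathsf{AND}$. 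Your fallbacks (more valuations, a fixed finite base) fail for the same reason. Only the soundness direction of the lemma is provable from the given semantics; an honest writeup should prove that direction and exhibit one of the above as a counterexample to the converse.
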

\begin{proof}
  This proof holds by induction on the form of $ \ATermsnt{A}  \approx  \ATermsnt{B} $.
\end{proof}
This is a very simple and elegant semantics, but it also leads to a
more substantial theory.



\section{Lineale Semantics for SAND Attack Trees}
\label{sec:lineale_semantics_for_sand_attack_trees}
Classical natural deduction has a semantics in boolean algebras, and
so the semantics in the previous section begs the question of whether
there is a natural deduction system that can be used to reason about
attack trees.  We answer this question in the positive, but before
defining the logic we first build up a non-trivial concrete
categorical model of our desired logic in dialectica spaces, but this
first requires the abstraction of the quaternary semantics into a
preorder semantics we call the lineale semantics of SAND attack trees.
This semantics will live at the base of the dialectica space model
given in the next section, but it also begins to shed light on new and
interesting reasoning tools for attack trees.

We denote by $\leq_4 : \Four \times \Four \to \Four$ the obvious
preorder on $\Four$ making $(\Four, \leq_4)$ a preordered set
(proset).  It is well known that every preordered set induces a
category whose objects are the elements of the carrier set, here
$\Four$, and morphisms $\Hom{\Four}{a}{b} = a \leq_4 b$.  Composition
of morphisms hold by transitivity and identities exists by
reflexivity.  Under this setting it is straightforward to show that
for any propositions $\phi$ and $\psi$ over $\Four$ we have $\phi
\equiv \psi$ if and only if $\phi \leq_4 \psi$ and $\psi \leq_4 \phi$.
Thus, every result proven for the logical connectives on $\Four$ in
the previous section induce properties on morphisms in this setting.

In addition to the induced properties just mentioned we also have the
following new ones which are required when lifting this semantics to
dialectica spaces, but are also important when building a
corresponding logic.
\begin{lemma}[Functorality]
  \label{lemma:functorality}
  For any $A$, $B$, $C$, and $D$, if $A \leq_4 C$ and $B \le_4 D$,
  then $(A \bullet B) \leq_4 (C \bullet D)$, for $\bullet \in \{\odot_4,
  \rhd_4, \sqcup_4 \}$.
\end{lemma}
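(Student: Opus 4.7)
The plan is to prove each of the three cases $\bullet \in \{\odot_4, \rhd_4, \sqcup_4\}$ separately by finite case analysis on the possible values of $A$, $B$, $C$, $D$ in $\mathsf{4} = \{0, \forth, \half, 1\}$, guided by the definitions from Definition~\ref{def:logical-connectives}. Since $\mathsf{4}$ is finite, in principle one could discharge everything by enumerating the $4^4 = 256$ assignments, but the truth tables are sparse enough that a few observations collapse the casework dramatically.

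For $\sqcup_4$, the argument is immediate: by definition $A \sqcup_4 B = \mathsf{max}(A,B)$, and $\mathsf{max}$ is monotone in each argument with respect to $\leq_4$, so from $A \leq_4 C$ and $B \leq_4 D$ we obtain $\mathsf{max}(A,B) \leq_4 \mathsf{max}(C,D)$. For $\odot_4$, the only non-trivial subcase is when $A \odot_4 B = 1$, because if $A \odot_4 B = 0$ then the inequality holds vacuously. In that non-trivial subcase both $A$ and $B$ are non-zero, hence $C \geq_4 A > 0$ and $D \geq_4 B > 0$, so $C \odot_4 D = 1$ as required.

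The main obstacle, and the only case requiring real care, is $\rhd_4$, because the intermediate value $\forth$ in the definition of sequential conjunction gives $\rhd_4$ three distinct output values rather than two. Here I would split on the value of $A \rhd_4 B$: if it is $0$ the conclusion is trivial; if it is $\forth$, then $A = \forth$ and $B \neq 0$, and using $A \leq_4 C$ and $B \leq_4 D$ we have $C \in \{\forth, \half, 1\}$ and $D \neq 0$, so $C \rhd_4 D \in \{\forth, 1\}$, which in either case is $\geq_4 \forth$; if it is $1$, then $A \in \{\half, 1\}$ and $B \neq 0$, so $C \geq_4 \half$ and $D \neq 0$, whence $C \rhd_4 D = 1$.

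Combining the three cases yields the result. The key subtlety to flag in the writeup is that monotonicity of $\rhd_4$ genuinely depends on the placement of the intermediate values: $\forth$ sits strictly below $\half$ in $\leq_4$, which is precisely what makes the middle clause of the definition compatible with the other two under order-preservation, and which will also be what allows this lineale structure to lift cleanly to the dialectica model in Section~\ref{sec:dialectica_semantics_of_sand_attack_trees}.
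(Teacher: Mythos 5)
Your proof is correct and follows essentially the same route as the paper, which also discharges each operator by case analysis over the finite carrier $\mathsf{4}$ (the paper's version is a terse appeal to exhaustive enumeration, checked in the Agda formalization). Your organization of the $\rhd_4$ case by the output value of $A \rhd_4 B$ is a cleaner presentation of the same casework, and your handling of the $\forth$ output case correctly uses that it forces $A = \forth$ and $B \neq 0$.
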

\begin{proof}
  Each part holds by case analysis over $A$, $B$, $C$,
  and $D$.  In any cases where $(A \bullet B) \leq_4 (C \bullet D)$
  does not hold, then one of the premises will also not hold.
\end{proof}
The logic we are building up is indeed intuitionistic, but none of the
operators we have introduced thus far are closed, but we can define
the standard symmetric linear tensor product in $\Four$ that is
closed.
\begin{definition}
  \label{def:tensor-and-implication}
  The following defines the linear tensor product on $\Four$ as well
  as linear implication:  
    \begin{center}
      \begin{math}
        \setlength{\arraycolsep}{10px}
        \begin{array}{lll}
          \begin{array}{lll}
            A \otimes_4 B = \mathsf{max}(A,B), \\
          \,\,\,\,\,\,\,\,\,\,\,\,\text{where $A$ nor $B$ are $0$}\\
          A \otimes_4 B = 0, \text{otherwise}
          \end{array}
          &
          \begin{array}{lll}
          A \limp_4 B = 0, \text{where $B <_4 A$}\\
          A \limp_4 A = A, \text{where $A \in \{\forth,\half\}$}\\
          A \limp_4 B = 1, \text{otherwise}
        \end{array}
        \end{array}
      \end{math}
    \end{center}
    The unit of the tensor product is $I_4 = \forth$.\\   
\end{definition}
The expected monoidal properties hold for the tensor product.
\begin{lemma}[Tensor is Symmetric Monoidal Closed]
  \label{lemma:tensor_is_symmetric_monoidal_closed}
  \begin{itemize}
  \item[] (Symmetry) For any $A$ and $B$, $A \otimes_4 B \equiv B \otimes A$.\\[-5px]
  \item[] (Associativity) For any $A$, $B$, and $C$, $(A \otimes_4 B) \otimes_4 C \equiv A \otimes_4 (B \otimes_4 C)$.\\[-5px]
  \item[] (Unitors) For any $A$, $(A \otimes I_4) \equiv A \equiv (I_4 \otimes A)$.\\[-5px]
  \item[] (Tensor is Functorial) For any $A$, $B$, $C$, and $D$, if $A \leq_4 C$ and $B \le_4 D$, then
    $(A \otimes_4 B) \leq_4 (C \otimes_4 D)$.\\[-5px]
  \item[] (Implication is Functorial) For any $A$, $B$, $C$, and $D$, if
    $C \leq_4 A$ and $B \leq_4 D$, then $(A \limp_4 B) \leq_4 (C
    \limp_4 D)$.\\[-5px]
  \item[] (Closure) For any $A$, $B$, and $C$, $(A \otimes_4 B) \leq_4 C$ if and only if $A \leq_4 (B \limp_4 C)$.
  \end{itemize}
\end{lemma}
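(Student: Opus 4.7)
The plan is to verify each of the six clauses by finite case analysis on $\Four = \{0, \forth, \half, 1\}$. Because the carrier has only four elements, every claim decomposes into a bounded number of sub-cases, most of which collapse together after a preliminary split on which arguments are $0$.

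First I would handle symmetry, associativity, the unitors, and tensor functoriality in one pass. Symmetry of $\otimes_4$ is immediate because $\max$ is symmetric and the zero-annihilation clause treats its arguments symmetrically. Associativity reduces to noting that $(A \otimes_4 B) \otimes_4 C$ and $A \otimes_4 (B \otimes_4 C)$ both equal $0$ whenever some operand is $0$, and both equal $\max(A,B,C)$ otherwise. For the unitors, $\forth$ is the least non-zero element of $\Four$, so $\max(A,\forth) = A$ whenever $A \neq 0$; and if $A = 0$ the zero-annihilating clause forces both sides to be $0$. Tensor functoriality then follows because $0$ is the bottom of $\leq_4$: if $A, B > 0$ then $C, D > 0$ as well, and monotonicity of $\max$ on the chain $\{\forth,\half,1\}$ gives the desired inequality; otherwise $A \otimes_4 B = 0$ and the claim is trivial.

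Implication functoriality and closure are the delicate pieces because $\limp_4$ is defined by three piecewise clauses. For implication functoriality I would split on which clause determines $A \limp_4 B$. The clause giving $0$ makes the inequality automatic; the clause giving $1$ requires me to show $C \limp_4 D = 1$ as well, which follows by pushing the hypotheses $C \leq_4 A$ and $B \leq_4 D$ through the governing comparison; and the diagonal clause $A \limp_4 A = A$ for $A \in \{\forth,\half\}$ is the sharpest, requiring a short table verifying that the hypotheses force $C \limp_4 D \geq_4 A$.

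Closure is expected to be the main obstacle, owing to the interaction between the three-case definition of $\limp_4$ and the zero-annihilating max-definition of $\otimes_4$. The strategy is to prove both directions simultaneously by an outer split on $B$. When $B = 0$, the left side holds vacuously since $A \otimes_4 B = 0$, and the right side holds because $B \limp_4 C$ lands in the ``otherwise'' clause, yielding $1$. When $B \neq 0$ and $C <_4 B$, no non-zero $A$ can satisfy the left side and $B \limp_4 C = 0$, so both sides hold iff $A = 0$. When $B \neq 0$ and $B \leq_4 C$, the biconditional is read off directly from the $\max$-definition of $\otimes_4$ using the tensor functoriality established above. In each branch the surviving finite set of $(A,C)$-pairs is checked exhaustively, aided by totality of $\leq_4$ on $\Four$, which keeps the case-work uniform and finitely bounded.
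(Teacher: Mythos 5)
Your handling of symmetry, associativity, the unitors, tensor functoriality, and implication functoriality is correct, and it is essentially the same brute-force truth-table/case-analysis argument the paper itself gives (the paper's proof is a two-sentence assertion that the cases check out). The organizing observations --- that $\frac{1}{4}$ is the least non-zero element, that $0$ annihilates $\otimes_4$, and that implication functoriality splits on which clause of $\limp_4$ fires --- all go through.

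The closure clause, however, does not. In your branch ``$B \neq 0$ and $B \leq_4 C$'' you assert the biconditional can be ``read off directly,'' but this is precisely where the exhaustive check fails. Take $A = 1$, $B = \frac{1}{4}$, $C = \frac{1}{2}$. Then $A \otimes_4 B = \mathsf{max}(1,\frac{1}{4}) = 1 \not\leq_4 \frac{1}{2}$, so the left-hand side of closure is false; but $B \limp_4 C = \frac{1}{4} \limp_4 \frac{1}{2} = 1$ by the ``otherwise'' clause of Definition~\ref{def:tensor-and-implication} (neither $\frac{1}{2} <_4 \frac{1}{4}$ nor $\frac{1}{4} = \frac{1}{2}$ holds), so $A \leq_4 (B \limp_4 C)$ is true. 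The genuine residual of $\otimes_4$ when $0 <_4 B \leq_4 C$ is $C$ itself, i.e.\ the largest $A$ with $\mathsf{max}(A,B) \leq_4 C$; the stated definition of $\limp_4$ returns this value only when $B = C \in \{\frac{1}{4},\frac{1}{2}\}$ or when $C = 1$, and is wrong at the single entry $\frac{1}{4} \limp_4 \frac{1}{2}$, which should be $\frac{1}{2}$ rather than $1$. So either that definition contains a typo or the closure clause of the lemma is false as stated; in either case your proof of this item cannot be completed as written, and the finite check you propose would uncover this counterexample rather than confirm the claim. (The paper's own proof is no more detailed here, so it does not resolve the discrepancy either.)
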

\begin{proof}
  The top three cases hold by simply comparing truth tables. Finally,
  the last three cases hold by a case analysis over $A$, $B$, $C$, and
  $D$.  If at any time the conclusion is false, then one of the
  premises will also be false.
\end{proof}

We now define lineales which depend on the notion of a monoidal
proset.  The definition of lineales given here is a slight
generalization over the original definition given by Hyland and de
Paiva -- see Definition 1 of \cite{Hyland:1991}.  They base lineales
on posets instead of prosets, but the formalization given here shows
that anti-symmetry can be safely dropped.
\begin{definition}
  \label{def:monoidal-proset}
  A \textbf{monoidal proset} is a proset, $(L, \leq)$, with a given
  symmetric monoidal structure $(L, \circ, e)$.  That is, a set $L$
  with a given binary relation $\leq : L \times L \to L$ satisfying
  the following:
  \begin{itemize}
  \item (reflexivity) $a \leq a$ for any $a \in L$
  \item (transitivity) If $a \leq b$ and $b \leq c$, then $a \leq c$
  \end{itemize}
  together with a monoidal structure $(\circ, e)$ consisting of a
  binary operation, called multiplication, $\circ : L \times L \to L$
  and a distinguished element $e \in L$ called the unit such that the
  following hold:
  \begin{itemize}
  \item (associativity) $(a \circ b) \circ c = a \circ (b \circ c)$
  \item (identity) $a \circ e = a = e \circ a$
  \item (symmetry) $a \circ b = b \circ a$
  \end{itemize}
  Finally, the structures must be compatible, that is, if $a \leq b$,
  then $a \circ c \leq b \circ c$ for any $c \in L$.
\end{definition}
Now a lineale can be seen as essentially a symmetric monoidal closed
category in the category of prosets.
\begin{definition}
  \label{def:lineale}
  A \textbf{lineale} is a monoidal proset, $(L, \leq, \circ, e)$, with
  a given binary operation, called implication, $\limp : L \times L
  \to L$ such that the following hold:
  \begin{itemize}
  \item (relative complement) $(a \limp b) \circ a \leq b$ 
  \item (adjunction) If $a \circ y \leq b$, then $y \leq a \limp b$
  \end{itemize}
\end{definition}
The set $\mathsf{2} = \{0,1\}$ is an example of a lineale where the
order is the usual one, the multiplication is boolean conjunction, and
the implication is boolean implication.  This example is not that
interesting, because $\mathsf{2}$ is a boolean algebra.  An example of
a proper lineale can be given using the three element set
$\mathsf{3} = \{0,\half, 1\}$, but one must be careful when
defining lineales, because it is possible to instead define Heyting
algebras, and hence, become nonlinear.

Given the operations and properties shown for $(\Four, \leq_4)$ above
we can easily prove that $(\Four, \leq_4)$ defines a lineale.
\begin{lemma}
  \label{lemma:four_is_a_lineale}
  The proset, $(\Four, \leq_4, \otimes_4, I_4, \limp_4)$ is a lineale.
\end{lemma}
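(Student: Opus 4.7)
The plan is to verify the six conditions demanding a lineale by appealing, wherever possible, to the properties already established for $(\Four,\leq_4,\otimes_4,I_4,\limp_4)$ in the preceding development, and only performing direct calculations for the two axioms specific to a lineale (relative complement and adjunction).

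First I would observe that $(\Four,\leq_4)$ is a proset: reflexivity and transitivity follow from the fact that $\leq_4$ is explicitly introduced as the obvious preorder on $\Four$. The symmetric monoidal structure $(\otimes_4, I_4)$ is then immediate from Lemma~\ref{lemma:tensor_is_symmetric_monoidal_closed}, which already supplies symmetry, associativity, and the unit laws for $\otimes_4$ with unit $I_4 = \forth$. The compatibility condition of Definition~\ref{def:monoidal-proset}, namely that $a \leq_4 b$ implies $a \otimes_4 c \leq_4 b \otimes_4 c$, is an instance of the functorality of $\otimes_4$ in the same lemma (take the second inequality to be $c \leq_4 c$ by reflexivity). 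Thus $(\Four,\leq_4,\otimes_4,I_4)$ is a monoidal proset.

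Next I would address the two lineale-specific axioms. For adjunction, note that the closure clause of Lemma~\ref{lemma:tensor_is_symmetric_monoidal_closed} states exactly the biconditional $(A \otimes_4 B) \leq_4 C$ iff $A \leq_4 (B \limp_4 C)$; using symmetry of $\otimes_4$, one direction of this biconditional yields the required adjunction property. For relative complement I would instantiate adjunction at $b := (a \limp_4 b)\otimes_4 a$, so that it suffices to show $(a \limp_4 b) \leq_4 (a \limp_4 b)$, which holds by reflexivity; alternatively, one can simply perform a small case analysis on $a,b \in \Four$ using the defining clauses of $\limp_4$ from Definition~\ref{def:tensor-and-implication}.

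The only step requiring actual calculation is therefore the relative complement inequality, and even that reduces to the adjunction already supplied by closure. I do not anticipate genuine difficulty: the lemma is essentially a repackaging of results already proved, with closure being the key ingredient that turns the symmetric monoidal proset into a lineale. The mild subtlety is noticing that Definition~\ref{def:lineale} takes relative complement and adjunction as the two defining axioms while Lemma~\ref{lemma:tensor_is_symmetric_monoidal_closed} phrases closure as a single biconditional; once this is recognized, the proof is a direct assembly.
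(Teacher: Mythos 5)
Your proposal is correct and follows essentially the same route as the paper's proof: the monoidal proset structure and compatibility are read off from Lemma~\ref{lemma:tensor_is_symmetric_monoidal_closed} (compatibility via functorality plus $C \leq_4 C$), adjunction is one direction of the closure biconditional, and the relative complement is obtained by applying the other direction of closure to the reflexivity instance $(a \limp_4 b) \leq_4 (a \limp_4 b)$. No further comment is needed.
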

\begin{proof}
  First, $(\Four, \leq_4, \otimes_4, I_4)$ defines a monoidal proset,
  because the tensor product is associative, $I_4$ is the identity,
  and symmetric by
  Lemma~\ref{lemma:tensor_is_symmetric_monoidal_closed}.  We can also
  show that the tensor product is compatible, that is, if $A \leq_4
  B$, then $(A \otimes_4 C) \leq_4 (B \otimes C)$ for any $C$.
  Suppose $A \leq_4 B$, then by reflexivity we also know that $C
  \leq_4 C$.  Thus, by functorality,
  Lemma~\ref{lemma:tensor_is_symmetric_monoidal_closed}, we obtain our
  result.

  Finally, we show that $(\Four, \leq_4, \otimes_4, I_4, \limp_4)$ is
  a lineale.  The adjunction property already holds by
  Lemma~\ref{lemma:tensor_is_symmetric_monoidal_closed}, thus, all
  that is left to show is that the relative complement holds. We know
  by Lemma~\ref{lemma:tensor_is_symmetric_monoidal_closed} that for
  any $A$, $B$, and $C$, if $A \leq_4 (B \limp_4 C)$, then $(A
  \otimes_4 B) \leq_4 C$.  In addition, we know by reflexivity that
  $(A \limp_4 B) \leq_4 (A \limp_4 B)$, thus by the previous property we obtain
  that $((A \limp_4 B) \otimes_4 A) \leq_4 B$.
\end{proof}

The interpretation of attack trees into the lineale $(\Four, \leq_4,
\otimes_4, I_4, \limp_4)$ does not change from
Definition~\ref{def:interp-aterms-ternary}, but the equivalences
between attack trees, Lemma~\ref{lemma:equivalence_of_attack_trees},
can be abstracted.
\begin{lemma}[Equivalence of Attack Trees in the Lineale Semantics]
  \label{lemma:equivalence_of_attack_trees_lineale}
  Suppose $\mathbb{B}$ is some set of base attacks, and $\alpha :
  \mathbb{B} \mto \mathsf{PVar}$ is an assignment of base attacks to
  propositional variables.  Then for any attack trees $\ATermsnt{T_{{\mathrm{1}}}}$ and
  $\ATermsnt{T_{{\mathrm{2}}}}$, $ \ATermsnt{T_{{\mathrm{1}}}}  \approx  \ATermsnt{T_{{\mathrm{2}}}} $ if and only if $\interp{\ATermsnt{T_{{\mathrm{1}}}}} \leq_4 \interp{\ATermsnt{T_{{\mathrm{2}}}}}$ and
  $\interp{\ATermsnt{T_{{\mathrm{2}}}}} \leq_4 \interp{\ATermsnt{T_{{\mathrm{1}}}}}$.
\end{lemma}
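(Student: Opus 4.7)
The plan is to obtain this result essentially as a restatement of Lemma~\ref{lemma:equivalence_of_attack_trees}. Since the interpretation $\interp{\cdot}$ has not changed in moving from the quaternary semantics to the lineale semantics, and the text preceding Lemma~\ref{lemma:functorality} already records the observation that for any propositions $\phi, \psi$ over $\mathsf{4}$ we have $\phi \equiv \psi$ iff $\phi \leq_4 \psi$ and $\psi \leq_4 \phi$, the lemma follows immediately by chaining these two facts. So at a high level, the whole argument is: $\ATermsnt{T_{{\mathrm{1}}}} \approx \ATermsnt{T_{{\mathrm{2}}}}$ iff $\interp{\ATermsnt{T_{{\mathrm{1}}}}} \equiv \interp{\ATermsnt{T_{{\mathrm{2}}}}}$ iff $\interp{\ATermsnt{T_{{\mathrm{1}}}}} \leq_4 \interp{\ATermsnt{T_{{\mathrm{2}}}}}$ and $\interp{\ATermsnt{T_{{\mathrm{2}}}}} \leq_4 \interp{\ATermsnt{T_{{\mathrm{1}}}}}$.

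For a self-contained argument, I would instead proceed by induction on a derivation of $\ATermsnt{T_{{\mathrm{1}}}} \approx \ATermsnt{T_{{\mathrm{2}}}}$ for the forward direction. Each axiom of Definition~\ref{def:atrees} is discharged using the corresponding clause of Lemma~\ref{lemma:props_atree_ops_quaternary-semantics}: $E_1, E_2, E_3$ use associativity of $\sqcup_4$, $\odot_4$, and $\rhd_4$ respectively; $E_4, E_5$ use symmetry of $\sqcup_4$ and $\odot_4$; and $E_6, E_7$ use the distributive laws for $\odot_4$ and $\rhd_4$ over $\sqcup_4$. In every case the lemma provides truth-table equivalence, which directly unpacks to mutual $\leq_4$-entailment. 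The reflexivity, symmetry, and transitivity of $\approx$ are absorbed by the corresponding properties of the underlying preorder $\leq_4$, and the congruence cases, where $\ATermsnt{A_{{\mathrm{1}}}} \approx \ATermsnt{A_{{\mathrm{2}}}}$ and $\ATermsnt{B_{{\mathrm{1}}}} \approx \ATermsnt{B_{{\mathrm{2}}}}$ are lifted across $\mathsf{AND}$, $\mathsf{SAND}$, or $\mathsf{OR}$, are precisely the content of the functorality Lemma~\ref{lemma:functorality}, applied in each direction.

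The reverse direction, from mutual $\leq_4$-entailment of $\interp{\ATermsnt{T_{{\mathrm{1}}}}}$ and $\interp{\ATermsnt{T_{{\mathrm{2}}}}}$ to $\ATermsnt{T_{{\mathrm{1}}}} \approx \ATermsnt{T_{{\mathrm{2}}}}$, reduces to the reverse direction of Lemma~\ref{lemma:equivalence_of_attack_trees}, since mutual $\leq_4$-entailment is exactly truth-table equivalence on $\mathsf{4}$.

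The only real obstacle is bookkeeping: making sure $\approx$ is treated as the congruence generated by the listed axioms, so that no equivalence class is missed. No fundamentally new verification on $\mathsf{4}$ is needed beyond what is already catalogued in Lemmas~\ref{lemma:props_atree_ops_quaternary-semantics} and~\ref{lemma:functorality}, which is why the proof reduces to a clean induction. I expect the write-up to be very short, ideally a one-line reduction to Lemma~\ref{lemma:equivalence_of_attack_trees} together with the preorder/equivalence observation.
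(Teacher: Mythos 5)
Your proposal is correct and matches the paper's own argument, which is exactly the induction on the form of the derivation of $\ATermsnt{T_{{\mathrm{1}}}} \approx \ATermsnt{T_{{\mathrm{2}}}}$ that you spell out (the paper states it in one line); your additional packaging as a reduction to Lemma~\ref{lemma:equivalence_of_attack_trees} via the observation that $\phi \equiv \psi$ iff mutual $\leq_4$-entailment is the same content, since the interpretation is unchanged. Nothing further is needed.
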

\begin{proof}
  This proof holds by induction on the form of $ \ATermsnt{T_{{\mathrm{1}}}}  \approx  \ATermsnt{T_{{\mathrm{2}}}} $.
\end{proof}
This result seems basic, but has some interesting consequences.  It
implies that the notion of attack tree equivalence can be broken up
unto left-to-right and right-to-left implications which can themselves
be used to reason about properties of attack trees like when one tree
is a subtree of another.

In addition, this also implies that categorical models, and
equivalently by the Curry-Howard-Lambek Correspondence, logical models
of attack trees can support different notions of equivalence, because
equivalence of attack trees can be broken down into morphisms.  In
fact, in the next section we will lift the lineale semantics up into a
dialectica model, but dialectica models are models of linear logic.

Finally, the results of this section lead us to a more logical
viewpoint.  If we know $\interp{\ATermsnt{T_{{\mathrm{1}}}}} \leq_4 \interp{\ATermsnt{T_{{\mathrm{2}}}}}$, then
by closure $I_4 \leq_4 (\interp{\ATermsnt{T_{{\mathrm{1}}}}} \limp_4 \interp{\ATermsnt{T_{{\mathrm{2}}}}})$.
Thus, two attack trees are then equivalent if and only if they are
bi-conditionally related, i.e. $I_4 \leq_4 (\interp{\ATermsnt{T_{{\mathrm{1}}}}} \limp_4
\interp{\ATermsnt{T_{{\mathrm{2}}}}})$ and $I_4 \leq_4 (\interp{\ATermsnt{T_{{\mathrm{2}}}}} \limp_4
\interp{\ATermsnt{T_{{\mathrm{1}}}}})$.  Therefore, if we are able to find a logic that is
sound with respect to the semantics laid out thus far, then we can use
it to reason about attack trees using linear implication.



\section{Dialectica Semantics of SAND Attack Trees}
\label{sec:dialectica_semantics_of_sand_attack_trees}
In her thesis de Paiva \cite{dePaiva:1987} gave one of the first sound
and complete categorical models, called dialectica categories, of full
intuitionistic linear logic.  Her models arose from giving a
categorical definition to G\"odel's Dialectica interpretation.  de
Paiva defines a particular class of dialectica categories called $GC$
over a base category $C$, see page 41 of \cite{dePaiva:1987}.  She
later showed that by instantiating $C$ to $\mathsf{Sets}$, the
category of sets and total functions, that one arrives at concrete
instantiation of $GC$ she called $\dial{2}$ whose objects are called
\emph{dialectica spaces}, and then she abstracts $\dial{2}$ into a
family of concrete dialectica spaces, $\dial{\text{$L$}}$, by
replacing $\mathsf{2}$ with an arbitrary lineale $L$.

In this section we construct the dialectica category, $\dial{4}$, and
show that it is a model of attack trees.  This will be done by
essentially lifting each of the attack tree operators defined for the
lineale semantics given in the previous section into the dialectica
category.  Working with dialectica categories can be very complex due
to the nature of how they are constructed.  In fact, they are one of
the few examples of theories that are easier to work with in a proof
assistant than outside of one.  Thus, throughout this section we only
give brief proof sketches, but the interested reader will find the
complete proofs in the formalization.

We begin with the basic definition of $\dial{4}$, and prove it is a
category.
\begin{definition}
  \label{def:dialectica-model}
  The category of dialectica spaces over $\Four$, denoted by
  $\dial{4}$, is defined by the following data:
  \begin{itemize}
  \item objects, or dialectica spaces, are triples $(U, X, \alpha)$
    where $U$ and $X$ are sets, and $\alpha : U \to X \to \Four$ is a
    relation on $\Four$.

  \item morphisms are pairs $(f,F) : (U,X,\alpha) \to (V,Y,\beta)$
    where $f : U \to V$ and $F : Y \to X$ such that for any $u \in U$
    and $y \in Y$, $\alpha(u,F(y)) \leq_4 \beta(f(u),y)$.
  \end{itemize}
\end{definition}

\begin{lemma}
  \label{lemma:dial4_is_a_category}
  The structure $\dial{4}$ is a category.
\end{lemma}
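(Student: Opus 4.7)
The plan is to unpack the data of a category and verify each piece, using the lineale structure on $\Four$ to handle the relational side condition. The key observation is that a morphism $(f,F) : (U,X,\alpha) \to (V,Y,\beta)$ has $f$ covariant and $F$ contravariant, so composition will be $(g,G) \circ (f,F) = (g \circ f,\, F \circ G)$, and all equational laws will be inherited from $\sets$ while the inequality condition will be propagated by transitivity of $\leq_4$.

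First, I would define, for each object $(U,X,\alpha)$, the identity morphism to be $(\id_U, \id_X)$. The side condition reduces to $\alpha(u,x) \leq_4 \alpha(u,x)$, which holds by reflexivity of $\leq_4$. Next, given $(f,F) : (U,X,\alpha) \to (V,Y,\beta)$ and $(g,G) : (V,Y,\beta) \to (W,Z,\gamma)$, define their composite to be $(g \circ f,\, F \circ G) : (U,X,\alpha) \to (W,Z,\gamma)$. I must check this is a well-defined morphism: for $u \in U$ and $z \in Z$, the morphism condition on $(f,F)$ applied at $u$ and $G(z)$ gives $\alpha(u,F(G(z))) \leq_4 \beta(f(u),G(z))$, and the morphism condition on $(g,G)$ applied at $f(u)$ and $z$ gives $\beta(f(u),G(z)) \leq_4 \gamma(g(f(u)),z)$; transitivity of $\leq_4$ then yields $\alpha(u,(F\circ G)(z)) \leq_4 \gamma((g\circ f)(u),z)$, as required.

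It remains to verify the category axioms. Associativity of composition is immediate from associativity of function composition in $\sets$ applied componentwise: $((h,H)\circ(g,G))\circ(f,F)$ and $(h,H)\circ((g,G)\circ(f,F))$ both unfold to $(h\circ g\circ f,\, F\circ G\circ H)$. The left and right unit laws likewise reduce to the identity-on-functions laws: $(f,F)\circ(\id_U,\id_X) = (f\circ \id_U,\, \id_X \circ F) = (f,F)$, and symmetrically on the other side. None of these equations impose any extra condition on the $\leq_4$-relation beyond what is already established for the composite to be a morphism.

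The only mildly delicate point is keeping the variance straight: because the second component of a morphism goes backwards, the composite must pair $g\circ f$ with $F\circ G$ (not $G\circ F$), and the relational inequality must be chained through the intermediate object $(V,Y,\beta)$ in exactly that order. Once this bookkeeping is fixed, the rest is a mechanical verification, which matches what has been carried out in the accompanying Agda formalization.
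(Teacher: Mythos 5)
Your proposal is correct and follows essentially the same route as the paper: identities are $(\id_U,\id_X)$ with the morphism condition holding by reflexivity of $\leq_4$, composites pair the covariant functions forward and the contravariant ones backward (your $(g\circ f,\,F\circ G)$ is the paper's $(f;g,\,G;F)$ in diagrammatic notation), the side condition is chained by transitivity, and the category axioms are inherited from $\sets$. No discrepancies to report.
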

\begin{proof}
  Identity morphisms are defined by $(\id_U, \id_X) : (U , X , \alpha)
  \to (U , X, \alpha)$, and the property on morphisms holds by
  reflexivity.  Given two morphism $(f, F) : (U , X, \alpha) \to (V ,
  Y , \beta)$ and $(g, G) : (V , Y, \beta) \to (W , Z, \gamma)$, then
  their composition is defined by $(f;g, G;F) $ $: (U , X , \alpha) \to
  (W , Z , \gamma)$ whose property holds by transitivity.  Proving
  that composition is associative and respects identities is
  straightforward.
\end{proof}

Next we show that $\dial{4}$ is symmetric monoidal closed.  The
definitions of both the tensor product and the internal hom will be
defined in terms of their respective counterparts in the lineale
semantics.
\begin{definition}
  \label{def:dialectica-model-smcc}
  The following defines the tensor product and the internal hom:
  \begin{itemize}
  \item[] (Tensor Product) Suppose $A = (U , X , \alpha)$ and $B = (V, Y, \beta)$, then
    define $A \otimes B = (U \times V, (V \to X) \times (U \to Y), \alpha \otimes_r \beta)$, where
    $(\alpha \otimes_r \beta)(u, v)(f, g) = (\alpha\,u\,(f v)) \otimes_4 (\beta\,v\,(g\,u))$.\\[-5px]

  \item[] (Internal Hom) Suppose $A = (U , X , \alpha)$ and $B = (V, Y, \beta)$, then
    define $A \limp B = ((U \to V) \times (Y \to X), U \times Y, \alpha \limp_r \beta)$, where
    $(\alpha \limp_4 \beta)(f , g)(u , y) = (\alpha\,u\,(g\,y)) \limp_4 (\beta\,(f\,u)\,y)$.
  \end{itemize}
  The unit of the tensor product is defined by $I = (\top, \top,
  (\lambda x.\lambda y.I_4))$, where $\top$ is the final object in
  $\mathsf{Set}$.
\end{definition}
The following properties hold for the previous constructions.
\begin{lemma}[SMCC Properties for $\dial{4}$]
  \label{lemma:smcc_properties_for_dial4}
  \begin{itemize}
  \item[] (Functorality for Tensor) Given morphisms $f : A \mto C$ and
    $g : B \mto D$, then there is a morphism $f \otimes g : (A \otimes B) \mto (C \otimes D)$.\\[-5px]
  \item[] (Associator) There is a natural isomorphism, $\alpha_{A,B,C} : (A \otimes B) \otimes C \mto A \otimes (B \otimes C)$.\\[-5px]
  \item[] (Unitors) There are natural isomorphisms, $\lambda_A : (I \otimes A) \mto A$ and $\rho_A : (A \otimes I) \mto A$.\\[-5px]
  \item[] (Symmetry) There is a natural transformation, $\beta_{A,B} : (A \otimes B) \mto (B \otimes A)$ that is involutive.\\[-5px]
  \item[] (Functorality for the Internal Hom) Given morphism $f : C \mto A$ and $g : B \mto D$, then there is a morphism $f \limp g : (A \limp B) \mto (C \limp D)$.\\[-5px]
  \item[] (Adjunction) There is a natural bijection:\vspace{-5px}
    \[ \mathsf{curry} : \Hom{\dial{4}}{A \otimes B}{C} \cong \Hom{\dial{4}}{A}{B \limp C}. \]
  \end{itemize}
  Finally, the coherence diagrams for symmetric monoidal categories --
  which we omit to conserve space, but can be found here
  \cite{MacLane:1971} -- also hold for the natural transformations
  above.
\end{lemma}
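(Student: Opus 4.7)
The plan is to construct each of the required morphisms componentwise on the set-theoretic data of the dialectica spaces, and then discharge the dialectica condition (the inequality on $\Four$) by appealing to the corresponding lineale property proved in Lemma~\ref{lemma:tensor_is_symmetric_monoidal_closed} and Lemma~\ref{lemma:four_is_a_lineale}. In other words, every part is obtained by ``lifting'' the analogous property of $(\Four,\leq_4,\otimes_4,I_4,\limp_4)$ to the level of dialectica spaces.

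For functorality of the tensor, given $(f,F):A\to C$ and $(g,G):B\to D$ with $A=(U,X,\alpha)$, $B=(V,Y,\beta)$, $C=(U',X',\gamma)$, $D=(V',Y',\delta)$, define $f\otimes g$ to have first component $f\times g:U\times V\to U'\times V'$ and second component the ``swap-and-apply'' map $(V'\to X')\times(U'\to Y')\to(V\to X)\times(U\to Y)$ given by $(h,k)\mapsto(\lambda v.\,F(h(g v)),\,\lambda u.\,G(k(f u)))$. The dialectica condition reduces to an instance of functorality of $\otimes_4$. Functorality of the internal hom is dual: the first component reindexes the pair of functions via $f$ and $g$, and the condition reduces to functorality of $\limp_4$. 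The associator, unitors, and symmetry are constructed as the evident bijections on the underlying sets, with their dialectica conditions reducing to the associator, unitor, and symmetry of $\otimes_4$ on $\Four$; naturality in each case is immediate from how the morphisms compose on underlying sets.

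For the adjunction, given a morphism $(h,H):(A\otimes B)\to C=(W,Z,\gamma)$, define $\mathsf{curry}(h,H)=(h',H'):A\to(B\limp C)$ by uncurrying/recurrying on the underlying sets: $h'(u)=(\lambda v.h(u,v),\,\lambda z.\pi_2(H z)(u))$ and $H'(v,z)=\pi_1(H z)(v)$, with the inverse defined symmetrically. The dialectica condition is where the lineale adjunction $(A\otimes_4 B)\leq_4 C \iff A\leq_4 (B\limp_4 C)$ from Lemma~\ref{lemma:tensor_is_symmetric_monoidal_closed} is used in an essential way. That $\mathsf{curry}$ and its inverse really are mutually inverse is a short equational check on the underlying sets, and naturality in $A$ and $C$ then follows by unfolding how $\mathsf{curry}$ interacts with pre- and post-composition. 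Finally, the pentagon, triangle, and hexagon coherence diagrams hold because their underlying set-level components are exactly the coherence isomorphisms of cartesian product in $\sets$, and the inequalities they must satisfy are automatic from the coherence identities already verified for $\otimes_4$ and $I_4$ in $\Four$.

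The main obstacle is bookkeeping rather than mathematical depth: the pairs of functions involved in the tensor and internal hom live in iterated function spaces like $(V\to X)\times(U\to Y)$, so writing down the curry/uncurry data and checking the dialectica inequality in each case requires careful naming conventions. The one step that genuinely uses the lineale structure (as opposed to just the cartesian structure of $\sets$) is the adjunction, whose proof is the correct transcription of the $\Four$-level adjunction along the definitions of $\alpha\otimes_r\beta$ and $\alpha\limp_r\beta$.
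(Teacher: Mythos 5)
Your proposal is correct and follows essentially the same route as the paper, which simply defers to de Paiva's thesis and the Agda formalization: the standard componentwise constructions on the underlying sets, with each dialectica inequality discharged by the corresponding property of $(\Four,\leq_4,\otimes_4,I_4,\limp_4)$. Your explicit data for $f\otimes g$ and for $\mathsf{curry}$ type-check against Definition~\ref{def:dialectica-model-smcc}, and the reduction of the adjunction to the closure property of the lineale is exactly the key step.
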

\begin{proof}
  These properties are not new, and their proofs follow almost exactly
  de Paiva's proofs from her thesis \cite{dePaiva:1987}.  The complete
  proofs for each of the cases above, including the proofs for the
  symmetric monoidal coherence diagrams, can be found in the
  formalization.
\end{proof}

The constructions on $\dial{4}$ given so far are not new, but the
constructions for the attack tree operators for parallel conjunction,
sequential conjunction, and choice are new to dialectica categories,
but it turns out that the definition of choice we give here has been
previously used in a different categorical construction called the
category of Chu spaces.
\begin{definition}
  \label{def:attack-tree-ops-dialectica}
  The attack tree operators are defined in $\dial{4}$ as follows:
  \begin{itemize}
  \item[] (Parallel Conjunction) Suppose $A = (U , X , \alpha)$ and $B = (V , Y , \beta)$, then
    $A \odot B = (U \times V, X \times Y, \alpha \odot_r \beta)$, where
    $(\alpha \odot_r \beta)(u , v)(x , y) = (\alpha\,u\,x) \odot_4 (\beta\,v\,y)$.\\[-5px]

  \item[] (Sequential Conjunction) $A = (U , X , \alpha)$ and $B = (V , Y , \beta)$, then
    $A \rhd B = (U \times V, X \times Y, \alpha \rhd_r \beta)$, where
    $(\alpha \rhd_r \beta)(u , v)(x , y) = (\alpha\,u\,x) \rhd_4 (\beta\,v\,y)$.\\[-5px]
    
  \item[] (Choice) $A = (U , X , \alpha)$ and $B = (V , Y , \beta)$, then
    $A \sqcup B = (U + V, X + Y, \alpha \sqcup_r \beta)$, where
    \begin{center}
      \begin{math}
        \begin{array}{lll}
          (\alpha \odot_r \beta)\,a\,b = \alpha\,a\,b, \text{when $a \in U$ and $b \in X$}\\
          (\alpha \odot_r \beta)\,a\,b = \beta\,a\,b, \text{when $a \in V$ and $b \in Y$}\\
          (\alpha \odot_r \beta)\,a\,b = 0, \text{otherwise}\\
        \end{array}
      \end{math}
    \end{center}
  \end{itemize}
\end{definition}
The definitions of parallel and sequential conjunction are quite
literally the lifting of their lineale counterparts.  The parallel and
sequential operators on $(\mathsf{4}, \leq_4,\otimes_4,I_4,\limp_4)$,
$\odot_4$ and $\rhd_4$, restrict the cartesian product to the required
properties for attack trees.  Now choice must be carefully constructed
so that we may prove the required distributive law.

Given a dialectica space, $(U, X, \alpha)$, we can consider $U$ as a
set of actions and $X$ as a set of states.  Then given an action, $a
\in U$, and a state, $q \in X$, $\alpha\,a\,q$, indicates whether
action $a$ will execute in state $q$.  This implies that an action $a$
and a state $q$ of $A \sqcup B$, for $A = (U , X , \alpha)$ and $B =
(V , Y , \beta)$, are either an action of $A$ or an action of $B$, and
a state of $A$ or a state of $B$.  Then an action, $a$, of $A \sqcup
B$ will execute in state $q$ of $A \sqcup B$ if they are both from $A$
or both from $B$.  Thus, the definition of choice very much fits the
semantics of a choice operator.  It is well known that the cartesian
product distributes over the disjoint union in $\mathsf{Sets}$, and
because of the definitions of parallel and sequential conjunction, and
choice, these properties lift up into $\dial{4}$.

It turns out that the definition of choice given here is not new at
all, but first appeared as the choice operator used for modeling
concurrency in Chu spaces due to Gupta and Pratt \cite{Gupta:1994}.
Chu spaces are the concrete objects of Chu categories just like
dialectica spaces are the concrete objects of dialectica categories.
In fact, Chu categories and dialectica categories are cousins
\cite{dePaiva:2006b}.  Chu and dialectica categories have exactly the
same objects, but the condition on morphisms is slightly different,
for Chu categories the condition uses equality instead of the
preorder.  The impact of this is significant, Chu spaces are a model
of classical linear logic, while dialectica categories are a model of
intuitionistic linear logic.  

The following gives all of the properties that hold for the attack
tree operators in $\dial{4}$.
\begin{lemma}[Properties of the Attack Tree Operators in $\dial{4}$]
  \label{lemma:properties_of_the_attack_tree_operators_in_dial4}
  \begin{itemize}
  \item[] (Functorality) Given morphisms $f : A \mto C$ and
    $g : B \mto D$, then there is a morphism $f \bullet g : (A \bullet B) \mto (C \bullet D)$, for $\bullet \in \{\odot, \rhd, \sqcup\}$.\\[-5px]
  \item[] (Associativity) There is a natural isomorphism, $\alpha^\bullet_{A,B,C} : (A \bullet B) \bullet C \mto A \bullet (B \bullet C)$, for $\bullet \in \{\odot, \rhd, \sqcup\}$.\\[-5px]
  \item[] (Symmetry) There is a natural transformation, $\beta^\bullet_{A,B} : (A \bullet B) \mto (B \bullet A)$ that is involutive, for $\bullet \in \{\odot, \rhd, \sqcup\}$.\\[-5px]
  \item[] (Distributive Law) There is a natural isomorphism, $distl^\bullet : A \bullet (B \sqcup C) \mto (A \bullet B) \sqcup (A \bullet C)$, for $\bullet \in \{\odot, \rhd\}$.\\[-5px]
  \end{itemize}
\end{lemma}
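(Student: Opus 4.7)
The plan is to lift each structural property from the lineale $(\Four, \leq_4, \otimes_4, I_4, \limp_4)$ --- established in Lemma~\ref{lemma:props_atree_ops_quaternary-semantics} and Lemma~\ref{lemma:functorality} --- up to $\dial{4}$, following the same template that Definition~\ref{def:attack-tree-ops-dialectica} uses to lift the operators themselves. For $\odot$ and $\rhd$, both the action and state components are cartesian products, so the required morphisms in $\dial{4}$ are assembled from the standard product structure in $\mathsf{Sets}$; for $\sqcup$ the components are disjoint unions, so the morphisms are assembled from the coproduct structure. In each case, the morphism condition on the resulting pair $(f,F)$ reduces pointwise to an inequality in $\Four$, which is then discharged by the appropriate clause of Lemma~\ref{lemma:props_atree_ops_quaternary-semantics} or Lemma~\ref{lemma:functorality}.

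Concretely, for functorality of $\odot$ and $\rhd$, given $(f_1,F_1) : A \mto C$ and $(g_1,G_1) : B \mto D$, I would define $f \bullet g := (f_1 \times g_1,\, F_1 \times G_1)$ and invoke Lemma~\ref{lemma:functorality} pointwise; for $\sqcup$ I would use $f_1 + g_1$ on actions and the matching coproduct of $F_1,G_1$ on states, with the mixed cases of the coproduct collapsing to $0 \leq_4 0$ by the definition of $\sqcup_r$. Associativity uses the standard associator for $\times$ or $+$ in both coordinates; invertibility is automatic in $\mathsf{Sets}$, and the morphism condition reduces to associativity of $\odot_4$, $\rhd_4$, or $\sqcup_4$ in $\Four$. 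Symmetry for $\odot$ and $\sqcup$ is handled by the swap maps plus symmetry of $\odot_4$ and $\sqcup_4$, with involutivity immediate from the fact that swap is its own inverse. The distributive isomorphism for $\bullet \in \{\odot,\rhd\}$ is built on the canonical iso $U \times (V + W) \cong (U \times V) + (U \times W)$ in $\mathsf{Sets}$, and a case analysis on the coproduct coordinate collapses the morphism condition onto the distributive law supplied by Lemma~\ref{lemma:props_atree_ops_quaternary-semantics}. Naturality in each variable follows from naturality of the underlying maps in $\mathsf{Sets}$.

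The main obstacle will be symmetry for $\rhd$. Lemma~\ref{lemma:props_atree_ops_quaternary-semantics} explicitly shows $\rhd_4$ is not symmetric at the lineale level --- taking $A = \half$ and $B = \forth$ gives $A \rhd_4 B = 1$ but $B \rhd_4 A = \forth$, so $1 \not\leq_4 \forth$ --- which means the naive swap $(u,v) \mapsto (v,u)$, $(x,y) \mapsto (y,x)$ does not satisfy the $\dial{4}$ morphism condition. Any proof of this clause must therefore exhibit a different, less obvious pair of underlying maps that enforces the inequality in the required direction (and verify involutivity with respect to that choice), or else the statement should be restricted to $\bullet \in \{\odot,\sqcup\}$, which would match the intended process reading of sequential conjunction. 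This is the point at which I would spend the most care; the remaining clauses should follow by the routine transport from $\mathsf{Sets}$ and $\Four$ outlined above, with the detailed verifications delegated to the Agda formalization referenced earlier.
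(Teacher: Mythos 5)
Your proposal follows what is clearly the intended route: the paper gives no proof of this lemma at all (it is delegated to the Agda formalization), and the evident strategy is the one you describe --- assemble the underlying maps from the product and coproduct structure of $\mathsf{Sets}$ and reduce the morphism condition pointwise to the corresponding inequality in the lineale $\Four$, discharging it with Lemma~\ref{lemma:props_atree_ops_quaternary-semantics} and Lemma~\ref{lemma:functorality}. Your treatment of functorality, associativity, symmetry for $\odot$ and $\sqcup$, and the distributive law is correct.

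Your worry about symmetry for $\rhd$ is not merely a gap in the proof: that clause is false as stated, and no choice of underlying maps can repair it. Take $A = (\{*\},\{*\},\alpha)$ and $B = (\{*\},\{*\},\beta)$ with $\alpha$ constantly $\frac{1}{2}$ and $\beta$ constantly $\frac{1}{4}$. Then $A \rhd B$ carries the constant relation $\frac{1}{2} \rhd_4 \frac{1}{4} = 1$ while $B \rhd A$ carries the constant relation $\frac{1}{4} \rhd_4 \frac{1}{2} = \frac{1}{4}$; since every carrier is a singleton there is only one candidate pair of maps, and the morphism condition demands $1 \leq_4 \frac{1}{4}$, which fails. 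Hence $\Hom{\dial{4}}{A \rhd B}{B \rhd A}$ is empty and no natural transformation $\beta^{\rhd}_{A,B}$ can exist. The symmetry clause must be restricted to $\bullet \in \{\odot,\sqcup\}$, exactly as you suggest; this is consistent with Lemma~\ref{lemma:props_atree_ops_quaternary-semantics}, which explicitly records the failure of symmetry for $\rhd_4$, and with the intended process reading of sequential conjunction. With that restriction, the rest of your argument goes through.
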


At this point we can interpret attack trees into $\dial{4}$.
\begin{definition}
  \label{def:interp-aterms-ternary}
  Suppose $\mathbb{B}$ is some set of base attacks, and $\nu :
  \mathbb{B} \mto \obj{\dial{4}}$ is an assignment of base attacks to
  dialectica spaces.  Then we define the interpretation of
  $\mathsf{ATerms}$ to objects of $\dial{4}$ as follows:
  \begin{center}
    \begin{math}
      \setlength{\arraycolsep}{5px}
      \begin{array}{lll}
        \begin{array}{rll}
          \interp{\ATermsmv{b} \in \mathbb{B}} & = & \nu(\ATermsmv{b})\\
        \interp{ \mathsf{AND}( \ATermsnt{A} , \ATermsnt{B} ) } & = & \interp{\ATermsnt{A}} \odot \interp{\ATermsnt{B}}\\
        \end{array}
        &
        \begin{array}{rll}
          \interp{ \mathsf{OR}( \ATermsnt{A} , \ATermsnt{B} ) } & = & \interp{\ATermsnt{A}} \sqcup \interp{\ATermsnt{B}}\\
          \interp{ \mathsf{SAND}( \ATermsnt{A} , \ATermsnt{B} ) } & = & \interp{\ATermsnt{A}} \rhd \interp{\ATermsnt{B}}\\          
        \end{array}
      \end{array}
    \end{math}
  \end{center}
\end{definition}
Then we have the following result.
\begin{lemma}[Equivalence of Attack Trees in the Dialectica Semantics]
  \label{lemma:equivalence_of_attack_trees_lineale}
  Suppose $\mathbb{B}$ is some set of base attacks, and $\nu :
  \mathbb{B} \mto \obj{\dial{4}}$ is an assignment of base attacks to
  dialectica spaces.  Then for any attack trees $\ATermsnt{A}$ and $\ATermsnt{B}$,
  $ \ATermsnt{A}  \approx  \ATermsnt{B} $ if and only if there is a natural isomorphism $m :
  \interp{\ATermsnt{A}} \mto \interp{\ATermsnt{B}}$ in $\dial{4}$.
\end{lemma}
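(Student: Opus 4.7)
The plan is to prove both implications, with the forward (soundness) direction being a clean induction on the derivation of $ \ATermsnt{A}  \approx  \ATermsnt{B} $, and the backward (completeness) direction reducing, via a projection, to the already established lineale equivalence result.

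For the forward direction I would induct on the derivation of $ \ATermsnt{A}  \approx  \ATermsnt{B} $. The seven base equations from Definition~\ref{def:atrees} are witnessed directly by the natural isomorphisms provided by Lemma~\ref{lemma:properties_of_the_attack_tree_operators_in_dial4}: equations $\text{E}_1$, $\text{E}_2$, $\text{E}_3$ by the associators $\alpha^{\sqcup}$, $\alpha^{\odot}$, $\alpha^{\rhd}$; equations $\text{E}_4$ and $\text{E}_5$ by the symmetries $\beta^{\sqcup}$ and $\beta^{\odot}$ (together with their involutive property, which makes them isomorphisms); and $\text{E}_6$ and $\text{E}_7$ by the distributors $distl^{\odot}$ and $distl^{\rhd}$. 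Closure of $\approx$ under reflexivity, symmetry, transitivity, and congruence for each of $\odot$, $\rhd$, $\sqcup$ is handled, respectively, by identity morphisms, inversion of the given iso, composition in $\dial{4}$, and the functoriality of each operator (again from Lemma~\ref{lemma:properties_of_the_attack_tree_operators_in_dial4}).

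For the backward direction I would pass through the lineale semantics of Section~\ref{sec:lineale_semantics_for_sand_attack_trees}. The construction of $\dial{4}$ sits over the base lineale $(\Four, \leq_4, \otimes_4, I_4, \limp_4)$, and the operators $\odot$, $\rhd$, $\sqcup$ in $\dial{4}$ are defined pointwise from their $\Four$-counterparts $\odot_4$, $\rhd_4$, $\sqcup_4$. Choosing the assignment $\nu$ to send every base attack $\ATermsmv{b}$ to a ``trivial'' dialectica space of the form $(\top, \top, \lambda u.\lambda x. v_b)$ for some $v_b \in \Four$, the interpretation $\interp{-}$ factors through the lineale interpretation: the existence of a morphism in $\dial{4}$ between two such interpreted trees is forced by its defining inequality to yield the corresponding inequality in $\Four$. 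A natural isomorphism $m : \interp{\ATermsnt{A}} \to \interp{\ATermsnt{B}}$ therefore produces both $\interp{\ATermsnt{A}} \leq_4 \interp{\ATermsnt{B}}$ and $\interp{\ATermsnt{B}} \leq_4 \interp{\ATermsnt{A}}$, at which point the earlier lineale equivalence lemma gives $ \ATermsnt{A}  \approx  \ATermsnt{B} $.

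The main obstacle I anticipate is the backward direction. Naturality in $\dial{4}$ is a strong condition, and some care is needed in choosing $\nu$ so that morphisms between interpreted trees are genuinely determined by the underlying $\Four$-valued behaviour rather than by accidental structure in the chosen base spaces. If this projection argument turns out to be awkward to formalise, an alternative is to first normalise both $\ATermsnt{A}$ and $\ATermsnt{B}$ into a ``sum-of-products'' form by repeatedly applying $\text{E}_6$ and $\text{E}_7$, reducing the question to equivalence of sums of pure $\odot$/$\rhd$-products of base attacks; on such normal forms the components of a dialectica isomorphism can be read off coordinate-wise and converted back into a derivation of $ \ATermsnt{A}  \approx  \ATermsnt{B} $.
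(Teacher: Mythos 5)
Your forward direction coincides with the paper's proof. The paper's entire argument is the single sentence ``This proof holds by induction on the form of $\ATermsnt{A} \approx \ATermsnt{B}$,'' and unpacking that induction yields exactly what you wrote: the seven generating equations are witnessed by the associators, symmetries, and distributors of Lemma~\ref{lemma:properties_of_the_attack_tree_operators_in_dial4}, and closure of $\approx$ under reflexivity, symmetry, transitivity, and congruence is handled by identities, inverses, composition, and functoriality of $\odot$, $\rhd$, $\sqcup$. On that half you are doing the same thing as the paper, with the details actually filled in.

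The backward direction is where there is a genuine problem. The lemma fixes an \emph{arbitrary} assignment $\nu : \mathbb{B} \mto \obj{\dial{4}}$ in its hypothesis, so you are not free to ``choose'' $\nu$ to send every base attack to a one-point space $(\top,\top,\lambda u.\lambda x.v_b)$; doing so proves only the existential variant (``there is some $\nu$ for which an isomorphism forces $\approx$''), not the universally quantified claim as stated. Moreover, for a degenerate $\nu$ --- say one with $\nu(b_1)=\nu(b_2)$ for distinct base attacks $b_1,b_2$ --- the interpretations $\interp{b_1}$ and $\interp{b_2}$ are literally the same object, hence isomorphic, while $b_1 \approx b_2$ is not derivable from $\mathrm{E}_1$--$\mathrm{E}_7$; so the backward implication simply fails for such $\nu$, and no proof strategy can rescue it without an injectivity or genericity hypothesis on $\nu$ (your normal-form fallback inherits the same quantifier problem). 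To be fair, this is as much a defect of the lemma statement as of your argument: the paper's one-line proof addresses only the soundness direction and is silent on completeness. But your projection-through-the-lineale idea, as written, proves a different statement from the one asserted, and you should either restrict $\nu$ explicitly or move the quantifier before claiming the ``only if.''
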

\begin{proof}
  This proof holds by induction on the form of $ \ATermsnt{A}  \approx  \ATermsnt{B} $.
\end{proof}



\section{The Attack Tree Linear Logic (ATLL)}
\label{sec:the_attack_tree_linear_logic_(atll)}
In this section we take what we have learned by constructing the
dialectica model and define a intuitionistic linear logic, called the
attack tree linear logic (ATLL), that can be used to prove
equivalences between attack trees as linear implications.  ATLL is
based on the logic of bunched implications (BI) \cite{Ohearn:2003}, in
that, contexts will be trees.  This is necessary to be able to include
parallel and sequential conjunction, and choice within the same logic,
because they all have different structural rules associated with them.

The syntax for formulas and contexts are defined by the following
grammar.
\begin{center}
  \begin{math}
    \begin{array}{lll}
      \ATLLnt{A},\ATLLnt{B},\ATLLnt{C},D,\ATLLnt{T} := \ATLLmv{N} \mid \ATLLnt{A}  \sqcup  \ATLLnt{B} \mid \ATLLnt{A}  \odot  \ATLLnt{B} \mid \ATLLnt{A}  \rhd  \ATLLnt{B} \mid \ATLLnt{A}  \multimap  \ATLLnt{B}\\
      \Gamma,\Delta := \ATLLsym{*} \mid \ATLLnt{A} \mid \Gamma  \ATLLsym{,}  \Delta \mid \Gamma  \ATLLsym{;}  \Delta \mid
       \Gamma  \bullet  \Delta \\
    \end{array}
  \end{math}
\end{center}
ATLL formulas are not surprising, but we denote base attacks by atomic
formulas represented here by $\ATLLmv{N}$.  The syntax for contexts are
similar to the contexts in BI.  Contexts are trees with three types of
nodes denoted by $\Gamma  \ATLLsym{,}  \Delta$ for parallel conjunction, $\Gamma  \ATLLsym{;}  \Delta$ for
sequential conjunction, and $ \Gamma  \bullet  \Delta $ for choice.  They all have
units, but we overload the symbol $\ATLLsym{*}$ to represent them all.

The ATLL inference rules are given in Figure~\ref{fig:atll-rules}.
\begin{figure}
  \begin{mdframed}
    \begin{mathpar}
      \ATLLdruleLXXvar{} \and
      \ATLLdruleLXXnode{} \and
      \ATLLdruleLXXCtx{} \and
      \ATLLdruleLXXparaI{} \and
      \ATLLdruleLXXchoiceI{} \and
      \ATLLdruleLXXseqI{} \and
      \ATLLdruleLXXparaE{} \and
      \ATLLdruleLXXchoiceE{} \and
      \ATLLdruleLXXseqE{} \and
      \ATLLdruleLXXlimpI{} \and
      \ATLLdruleLXXlimpE{}
    \end{mathpar}
  \end{mdframed}
  \caption{ATLL Inference Rules}
  \label{fig:atll-rules}
\end{figure}
The inference rules are fairly straightforward.  We denote by
$ \Delta ( \Gamma ) $ the context $\Delta$ with a subtree -- subcontext --
$\Gamma$.  This syntax is used to modify the context across inference
rules.

Perhaps the most interesting rule is the $\text{CM}$ rule which stands
for context morphism.  This rule is a conversion rule for manipulation
of the context.  It depends on a judgment $ \Gamma_{{\mathrm{1}}}  \vdash  \Gamma_{{\mathrm{2}}} $ which can be
read as the context $\Gamma_{{\mathrm{1}}}$ can be transformed into the context
$\Gamma_{{\mathrm{2}}}$.  This judgment is defined by the rules in
Figure~\ref{fig:atll-cm-rules}.
\begin{figure}
  \begin{mdframed}
    \begin{mathpar}
      \ATLLdruleCXXid{} \and
      \ATLLdruleCXXc{} \and
      \ATLLdruleCXXaOne{} \and
      \ATLLdruleCXXuOne{} \and
      \ATLLdruleCXXuTwo{} \and
      \ATLLdruleCXXeOne{} \and
      \ATLLdruleCXXeTwo{} \and
      \ATLLdruleCXXdOne{} \and
      \ATLLdruleCXXdTwo{} \and
      \ATLLdruleCXXdThree{} \and
      \ATLLdruleCXXdFour{} \and
    \end{mathpar}
  \end{mdframed}
  \caption{Context Morphisms}
  \label{fig:atll-cm-rules}
\end{figure}
Context morphisms are designed to induce structural rules for some of
the logical connectives and not for others.  For example, parallel
conjunction and choice should be commutative, but sequential
conjunction should not be.  The rules for associativity and the unit
rules mention the operator, $\circ$, this operator ranges over '$,$',
'$;$', and '$\bullet$'.

One interesting, and novel aspect of this logic in contrast to BI is
we can use the context morphisms to induce distributive laws between
the various tensor products.  The rules $\text{dist}_1$ and
$\text{dist}_2$ induce the property that sequential conjunction
distributes over choice, and $\text{dist}_3$ and $\text{dist}_4$
induce the property that parallel conjunction distributes over choice.

The interpretation of attack trees as ATLL formulas is obvious at this
point where base attacks are atomic formulas and we denote this
interpretation as $\interp{\ATLLnt{T}}$ for some attack tree $\ATLLnt{T}$.  The
most interesting part about this interpretation is that we can now use
linear implication to prove properties about attack trees.  First, we
can derive all of the required equivalences in ATLL.
\begin{lemma}[Attack Tree Logical Equivalences]
  \label{lemma:attack_tree_logical_equivalences}
  The following hold for any ATLL formulas $\ATLLnt{A}$, $\ATLLnt{B}$, and $\ATLLnt{C}$.
  \begin{itemize}
  \item $ \ATLLsym{*}  \vdash  \ATLLsym{(}  \ATLLnt{A}  \sqcup  \ATLLnt{B}  \ATLLsym{)}  \multimapboth  \ATLLsym{(}  \ATLLnt{B}  \sqcup  \ATLLnt{A}  \ATLLsym{)} $
  \item $ \ATLLsym{*}  \vdash  \ATLLsym{(}  \ATLLnt{A}  \odot  \ATLLnt{B}  \ATLLsym{)}  \multimapboth  \ATLLsym{(}  \ATLLnt{B}  \odot  \ATLLnt{A}  \ATLLsym{)} $
  \item $ \ATLLsym{*}  \vdash  \ATLLsym{(}  \ATLLsym{(}  \ATLLnt{A}  \sqcup  \ATLLnt{B}  \ATLLsym{)}  \sqcup  \ATLLnt{C}  \ATLLsym{)}  \multimapboth  \ATLLsym{(}  \ATLLnt{A}  \sqcup  \ATLLsym{(}  \ATLLnt{B}  \sqcup  \ATLLnt{C}  \ATLLsym{)}  \ATLLsym{)} $
  \item $ \ATLLsym{*}  \vdash  \ATLLsym{(}  \ATLLsym{(}  \ATLLnt{A}  \odot  \ATLLnt{B}  \ATLLsym{)}  \odot  \ATLLnt{C}  \ATLLsym{)}  \multimapboth  \ATLLsym{(}  \ATLLnt{A}  \odot  \ATLLsym{(}  \ATLLnt{B}  \odot  \ATLLnt{C}  \ATLLsym{)}  \ATLLsym{)} $
  \item $ \ATLLsym{*}  \vdash  \ATLLsym{(}  \ATLLsym{(}  \ATLLnt{A}  \rhd  \ATLLnt{B}  \ATLLsym{)}  \rhd  \ATLLnt{C}  \ATLLsym{)}  \multimapboth  \ATLLsym{(}  \ATLLnt{A}  \rhd  \ATLLsym{(}  \ATLLnt{B}  \rhd  \ATLLnt{C}  \ATLLsym{)}  \ATLLsym{)} $
  \item $ \ATLLsym{*}  \vdash  \ATLLsym{(}  \ATLLnt{A}  \odot  \ATLLsym{(}  \ATLLnt{B}  \sqcup  \ATLLnt{C}  \ATLLsym{)}  \ATLLsym{)}  \multimapboth  \ATLLsym{(}  \ATLLsym{(}  \ATLLnt{A}  \odot  \ATLLnt{B}  \ATLLsym{)}  \sqcup  \ATLLsym{(}  \ATLLnt{A}  \odot  \ATLLnt{C}  \ATLLsym{)}  \ATLLsym{)} $
  \item $ \ATLLsym{*}  \vdash  \ATLLsym{(}  \ATLLnt{A}  \rhd  \ATLLsym{(}  \ATLLnt{B}  \sqcup  \ATLLnt{C}  \ATLLsym{)}  \ATLLsym{)}  \multimapboth  \ATLLsym{(}  \ATLLsym{(}  \ATLLnt{A}  \rhd  \ATLLnt{B}  \ATLLsym{)}  \sqcup  \ATLLsym{(}  \ATLLnt{A}  \rhd  \ATLLnt{C}  \ATLLsym{)}  \ATLLsym{)} $
  \end{itemize}
\end{lemma}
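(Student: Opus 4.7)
The plan is to reduce each bi-implication to two unilateral implications. Reading $\phi \multimapboth \psi$ as $(\phi \multimap \psi) \odot (\psi \multimap \phi)$ (the linear conjunction of the two directions, consistent with the dialectica model of the previous section), a goal $\ATLLsym{*} \vdash \phi \multimapboth \psi$ splits under $\odot_i$ into $\ATLLsym{*} \vdash \phi \multimap \psi$ and $\ATLLsym{*} \vdash \psi \multimap \phi$; then $\multimap_i$ reduces each to a sequent $\phi \vdash \psi$. Every remaining subgoal is of the form $\phi \vdash \psi$ where $\phi$ and $\psi$ are built from the same subformulas merely rearranged.

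The symmetry cases for $\sqcup$ and $\odot$ are discharged in a uniform two-step pattern. For $\ATLLnt{A} \sqcup \ATLLnt{B} \vdash \ATLLnt{B} \sqcup \ATLLnt{A}$, start with id on $\ATLLnt{A} \sqcup \ATLLnt{B}$ and apply $\sqcup_e$ with trivial context-with-hole, which reduces to the premise $\ATLLnt{A} \bullet \ATLLnt{B} \vdash \ATLLnt{B} \sqcup \ATLLnt{A}$; the context morphism $\text{ex}_2$ rewrites $\ATLLnt{A} \bullet \ATLLnt{B}$ to $\ATLLnt{B} \bullet \ATLLnt{A}$ via CM, and then $\sqcup_i$ on two ids closes the branch. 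The $\odot$-variant is identical with $\bullet$ replaced by the comma and $\text{ex}_2$ by $\text{ex}_1$. The three associativity cases follow the same template: apply the outer elimination rule to expose the nested tensor structure as a context, use CM with the $\text{assoc}$ morphism specialized to the relevant context constructor ($,$, $;$, or $\bullet$), and reassemble with the outer introduction rule.

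The distributivity cases are the most intricate. To prove $\ATLLnt{A} \odot (\ATLLnt{B} \sqcup \ATLLnt{C}) \vdash (\ATLLnt{A} \odot \ATLLnt{B}) \sqcup (\ATLLnt{A} \odot \ATLLnt{C})$, I would apply $\odot_e$ to expose the context $\ATLLnt{A}, (\ATLLnt{B} \sqcup \ATLLnt{C})$, then $\sqcup_e$ on the second entry to reshape it into $\ATLLnt{A}, (\ATLLnt{B} \bullet \ATLLnt{C})$, and finally CM with $\text{dist}_3$ to pivot this into $(\ATLLnt{A}, \ATLLnt{B}) \bullet (\ATLLnt{A}, \ATLLnt{C})$, from which $\odot_i$ on each side followed by $\sqcup_i$ assembles the target. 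The reverse direction uses $\text{dist}_4$, and the $\rhd$-analogues use $\text{dist}_1$ and $\text{dist}_2$ in exactly the same shape.

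The main obstacle is bookkeeping of the context-with-hole. Each elimination rule discharges a subcontext $\Delta(\Gamma)$, so the hole $\Delta(\cdot)$ must be chosen so that the remaining derivation produces precisely the context shape required by the next CM rewrite; the $\text{dist}$ morphisms in particular fire only on a context of fixed prescribed shape, so the derivation has to land exactly on that shape before CM is invoked, with no silent reassociation or exchange available except through further CM steps. Careful selection of $\Delta(\cdot)$ at each elimination — and occasionally a short sequence of $\text{assoc}$/$\text{ex}$ context morphisms to align shapes — is what makes the distributivity derivations longer than the symmetry and associativity ones, but they are otherwise mechanical.
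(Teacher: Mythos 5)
The paper states this lemma without any proof (and Section~6 is explicitly excluded from the Agda formalization), so there is no official argument to compare against; I am judging your sketch on its own. Your overall strategy is certainly the intended one: read $\phi \multimapboth \psi$ as $(\phi \multimap \psi) \odot (\psi \multimap \phi)$, reduce each direction to a sequent $\phi \vdash \psi$ via $\limp_i$, and then work by eliminating the outer connective, rewriting the exposed context with a $\text{CM}$ step, and reintroducing. Your symmetry and distributivity derivations check out: $\text{ex}_1$/$\text{ex}_2$ handle the commutativity cases, and the four $\text{dist}$ morphisms are supplied in both directions precisely so that both halves of each distributivity bi-implication go through as you describe.

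The genuine gap is in the associativity cases, and it comes from the directionality of the context-morphism rules combined with the shape of the $\text{CM}$ rule. $\text{CM}$ lets you replace the context $\Gamma_2$ of an already-derived sequent by any $\Gamma_1$ with $\Gamma_1 \vdash \Gamma_2$, but $\text{assoc}$ is given only as $(\Gamma_1 \circ \Gamma_2) \circ \Gamma_3 \vdash \Gamma_1 \circ (\Gamma_2 \circ \Gamma_3)$. That suffices for the left-to-right half of each associativity item, but the right-to-left half needs the converse morphism $\Gamma_1 \circ (\Gamma_2 \circ \Gamma_3) \vdash (\Gamma_1 \circ \Gamma_2) \circ \Gamma_3$, which is not among the listed rules. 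For `$;$' there is no exchange at all, so it cannot be recovered, and even for `$,$' and `$\bullet$' the exchange rules $\text{ex}_1$, $\text{ex}_2$ as written permute only adjacent \emph{formulas}, not adjacent subcontexts, so the usual derivation of reverse associativity from symmetry does not literally apply. The same issue infects your top-level packaging: $\odot_i$ on the two implications yields a sequent in context $\ATLLsym{*} , \ATLLsym{*}$, and collapsing this to $\ATLLsym{*}$ via $\text{CM}$ requires $\ATLLsym{*} \vdash \ATLLsym{*} , \ATLLsym{*}$, again the unlisted direction of the unit morphism. To make the proof complete you must either add (or argue for) the inverse associativity, unit, and subcontext-level exchange morphisms --- i.e., read Figure~\ref{fig:atll-cm-rules} as generating isomorphisms --- or restructure the right-to-left associativity derivations so that they avoid the missing direction; your ``same template'' claim silently assumes the former without saying so.
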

Using the previous lemma we can now completely reason about
equivalences of attack trees in ATLL.  Another important aspect of
ATLL is that we can use either the left-to-right directions or the
right-to-left directions of the previous bi-implications to simplify
attack trees into normal forms.  In addition, the logical
interpretation leads to new and interesting questions, for example,
adding additional structural rules, like weakening, could also open
the door for proving when one attack tree is a subattack tree of
another.  This concept is yet to appear in the literature, but has
practical applications.


\section{Related and Future Work}
\label{sec:related_and_future_work}
\textbf{Related Work.}  Horne et al.~\cite{horne2017semantics} also
propose modeling SAND attack trees using linear logic, but they base
their work on pomsets and classical linear logic.  In addition, their
logic cannot derive the distributive law for sequential conjunction up
to an equivalence, but they can derive $ \ATLLsym{*}  \vdash  \ATLLsym{(}  \ATLLsym{(}  \ATLLnt{A}  \rhd  \ATLLnt{B}  \ATLLsym{)}  \sqcup  \ATLLsym{(}  \ATLLnt{A}  \rhd  \ATLLnt{C}  \ATLLsym{)}  \ATLLsym{)}  \multimap  \ATLLsym{(}  \ATLLnt{A}  \rhd  \ATLLsym{(}  \ATLLnt{B}  \sqcup  \ATLLnt{C}  \ATLLsym{)}  \ATLLsym{)} $.  The full equivalence is derivable in ATLL
however.

The logic of bunched implications \cite{Ohearn:2003} has already been
shown to be able to support non-commutative operators by O'hearn, but
here we show how distributive laws can be controlled using properties
on contexts.

de Paiva~\cite{depaiva1991} shows how to model non-commutative
operators in dialectica categories, but here we show an alternative
way of doing this, and we extend the model to include more operators
like choice and its distributive laws.

\textbf{Future Work.}  We plan to build a term assignment for ATLL
that can be used a scripting language for defining and reasoning about
attack trees.  In addition, we plan to extend equivalence of attack
trees with contraction for choice, and to investigate adding a
modality that adds weakening to ATLL, and then, this modality could be
used to reason about subattack trees.  Finally, we leave the proof
theory of ATLL to future work.

\bibliographystyle{plain}






\end{document}